\documentclass[11pt]{article}

\usepackage[utf8]{inputenc}
\usepackage[T1]{fontenc}
\usepackage{lmodern}

\usepackage{amsmath, amssymb, amsthm}
\usepackage{mathtools}

\usepackage[margin=1.1in]{geometry}
\usepackage{microtype}
\usepackage{xcolor}
\usepackage{booktabs}
\usepackage{array}
\usepackage{tabularx}
\usepackage{enumitem}
\usepackage{fancyvrb}
\usepackage{needspace}
\usepackage{etoolbox}

\usepackage[hidelinks,pdftitle={Write-Domain Separation and Non-Custodial Enforcement},pdfauthor={Matthias Hauser},pdfsubject={Ledger access control; non-custodial encumbrance},pdfkeywords={non-custodial enforcement, write-domain separation, ledger access control, key sovereignty, account abstraction, ERC-4337, EIP-7702, commitment-based ledgers, zero-knowledge proofs, nullifier separation}]{hyperref}

\makeatletter
\let\origStartSection\@startsection
\renewcommand{\@startsection}[6]{%
  \ifcase#2\relax
    \Needspace*{12\baselineskip}% level 0 (\part)
  \or
    \Needspace*{10\baselineskip}% level 1 (\section)
  \or
    \Needspace*{8\baselineskip}%  level 2 (\subsection)
  \or
    \Needspace*{6\baselineskip}%  level 3 (\subsubsection)
  \or
    \Needspace*{4\baselineskip}%  level 4 (\paragraph)
  \fi
  \origStartSection{#1}{#2}{#3}{#4}{#5}{#6}%
}
\makeatother

\makeatletter
\def\BreakableUnderscore{\leavevmode\kern.06em\vbox{\hrule\@width.5em}\allowbreak}
\makeatother

\newcolumntype{L}{>{\raggedright\arraybackslash}X}
\newcolumntype{R}{>{\raggedleft\arraybackslash}X}

\theoremstyle{plain}
\newtheorem{theorem}{Theorem}[section]
\newtheorem{lemma}[theorem]{Lemma}
\newtheorem{proposition}[theorem]{Proposition}
\newtheorem{corollary}[theorem]{Corollary}
\newtheorem{fact}[theorem]{Fact}

\theoremstyle{definition}
\newtheorem{definition}[theorem]{Definition}

\theoremstyle{remark}
\newtheorem{remark}[theorem]{Remark}

\newcommand{\sk}{\mathsf{sk}}
\newcommand{\pk}{\mathsf{pk}}
\newcommand{\cm}{\mathsf{cm}}
\newcommand{\nfs}{\mathsf{nf}_{\mathsf{spend}}}
\newcommand{\nfe}{\mathsf{nf}_{\mathsf{encumber}}}
\newcommand{\Mactive}{M_{\mathsf{active}}}
\newcommand{\Mtomb}{M_{\mathsf{tomb}}}
\newcommand{\NCEE}{\textsf{NCEE}}
\newcommand{\KS}{\textsf{KS}}
\newcommand{\AAC}{\textsf{AAC}}
\newcommand{\ABLM}{\textsf{ABLM}}
\newcommand{\PSLM}{\textsf{PSLM}}
\newcommand{\Adv}{\mathbf{Adv}}

\title{Write-Domain Separation and Non-Custodial Enforcement:\\
A Structural Impossibility in Account-Based Ledgers,\\
with a Commitment-Based Construction}
\author{Matthias Hauser\\\href{mailto:research@blockbird.io}{research@blockbird.io}}
\date{}

\begin{document}

\maketitle

\begin{abstract}
Account-based ledgers --- standard externally-owned accounts (EOAs), ERC-4337 smart accounts, post-Pectra EIP-7702 delegated EOAs --- place the holder of the controlling key at the apex of asset authorization. We ask a structural question about ledger access control: under this authorization model, can a protocol enforce the future disposition of an asset \emph{without taking custody} and \emph{without requiring the owner's cooperation at enforcement time}? We formalize the target as \textbf{Non-Custodial Enforced Encumbrance} (\NCEE), a four-property specification covering self-custody, transition restriction, irrevocability, and permissionless enforcement.

This paper is primarily a contribution to ledger access-control theory; cryptographic primitives appear as the construction's tools rather than as the main object of study.

\paragraph{Result 1 (impossibility).} We define the \textbf{Key Sovereignty Axiom} (\KS): a ledger model satisfies \KS{} if the holder of the controlling key can always authorize a transition that weakens any active protocol-imposed restriction. We prove that any ledger satisfying \KS{} cannot realize \NCEE. Standard EOAs, ERC-4337 smart accounts (in the standard single-key configuration), and EIP-7702 delegated EOAs satisfy \KS{} for their standard asset paths; we therefore obtain concrete impossibility statements for each. The structural reason in each case is named explicitly --- for EIP-7702, it is that protocol-level processing of new authorization tuples cannot be intercepted by current delegate code.

\paragraph{Result 2 (necessity and witness).} We define \textbf{Asset-Authorization Coupling} (\AAC): an asset type has \AAC{} if valid transitions are cryptographically coupled to the mechanism's current restriction state. We prove that \AAC{} is necessary for \NCEE{} in the transfer-dichotomous asset setting studied here. To witness the positive side, we introduce the \textbf{envelope}, a primitive for commitment-based private-state ledgers that binds a note, a condition tree, and a redistribution intent to protocol-maintained marker sets. The construction separates ordinary spend nullifiers from a new \emph{encumbrance-namespace} nullifier derived from note randomness rather than the owner key, so that permissionless enforcement does not require access to owner key material. We prove the envelope realizes \NCEE{} under stated cryptographic assumptions and a deployment assumption that the marker-set registry is immutable; three concrete deployment templates instantiating the immutability assumption are given.

\paragraph{Result 3 (analysis and implementation).} We define games for encumbrance integrity, settlement security, key-compromise resilience, and encumbrance indistinguishability, and state explicitly which guarantees are theorem-level within the model, which are reduction-based under stated assumptions, and which additionally rely on heuristic or implementation assumptions. A reference implementation in Noir and UltraHonk supports the empirical claims; we report concrete gas measurements, recursive aggregation benchmarks, and a practical-economics analysis showing the position-value range over which the construction is economical on L1 and L2 deployments.

\medskip
\noindent\textbf{Keywords.} non-custodial enforcement, write-domain separation, ledger access control, key sovereignty, account abstraction (ERC-4337, EIP-7702), commitment-based ledgers, zero-knowledge proofs, nullifier separation, conditional execution, decentralized finance.

\noindent\textbf{MSC Classification.} 68M14 (Distributed systems), 94A60 (Cryptography), 94A62 (Authentication, digital signatures and secret sharing).

\noindent\textbf{ACM CCS.} Security and privacy $\to$ Distributed systems security; Theory of computation $\to$ Cryptographic primitives.
\end{abstract}

\tableofcontents
\newpage

\section{Introduction}

A recurring goal in decentralized finance is to let a borrower retain direct control of collateral while still allowing a protocol to enforce liquidation when a public condition is met. In deployed systems, that guarantee is usually obtained by avoiding the problem: the asset is moved into protocol-controlled custody, and enforcement is easy because control has already changed hands. The harder question is whether the same guarantee can be obtained \emph{without custody}.

This paper studies that question at the level of ledger structure rather than wallet interface or mechanism engineering. The relevant issue is not whether a system exposes a convenient policy language, but whether the asset's native authorization path can bypass the mechanism after the encumbrance is created. The paper's central claim is therefore theorem-driven: the decisive distinction is whether authorization remains under unilateral owner control or is cryptographically tied to mechanism state.

We formalize the target as \textbf{Non-Custodial Enforced Encumbrance} (\NCEE). Informally, \NCEE{} asks for four properties at once:
\begin{enumerate}
  \item the owner retains the key and no third party is a required co-signer;
  \item while the encumbrance is active, the asset may move only along mechanism-defined paths;
  \item the owner cannot unilaterally remove the encumbrance; and
  \item any third party may trigger enforcement when the public condition is met.
\end{enumerate}
The paper revolves around one structural theorem and one witnessing construction. The theorem shows where \NCEE{} is impossible. The construction shows that the obstruction is not universal. Everything else in the paper is organised around making that separation precise.

\subsection{Main Theorem-Level Idea}

The central distinction is what we call \textbf{write-domain separation}. If the owner's native authorization path can always issue a valid transfer outside the mechanism, then any purported encumbrance can be weakened unilaterally and \NCEE{} fails. If, instead, a valid spend path remains cryptographically checked against protocol-maintained restriction state, then non-custodial enforcement becomes possible.

We formalize the negative side through the \textbf{Key Sovereignty Axiom} (\KS) and the positive side through \textbf{Asset-Authorization Coupling} (\AAC). \KS{} isolates the single feature that matters for impossibility in the setting studied here: the owner's native authorization path can always weaken an active restriction. \AAC{} isolates the enabling feature on the positive side: a valid state transition is inseparable from the mechanism's current restriction state.

These definitions are intended to be sparse rather than encyclopedic. They do not try to classify every wallet design or every hybrid execution architecture. They are chosen because they isolate where enforcement power lives. In the transfer-dichotomous asset setting studied here, that is the level at which the negative and positive sides of the paper line up cleanly.

\subsection{Main Results}

The paper has three main theorem-level components.

\paragraph{Result 1: Impossibility for \KS{} asset paths.} Any ledger model satisfying \KS{} cannot realize \NCEE. The force of the result is model-level: once \KS{} holds for an asset path, refinements inside that same authorization path do not recover \NCEE{} within the setting analysed here. As concrete instantiations, we show that standard EOAs, ERC-4337 smart accounts, and post-Pectra EIP-7702 delegated EOAs satisfy \KS{} for their standard asset paths.

\paragraph{Result 2: Necessity of \AAC{} and a witnessing positive construction.} \AAC{} is necessary for \NCEE{} in the asset setting formalized here. We then construct an \textbf{envelope} primitive in an extended private-state ledger model and prove that it realizes \NCEE{} under explicit cryptographic assumptions and a deployment assumption that the marker-set registry is immutable. The construction therefore witnesses the positive side of the separation rather than merely naming it.

\paragraph{Result 3: Explicit security scope.} The envelope binds a note, a condition tree, and a redistribution intent. It separates ordinary spend nullifiers from encumbrance nullifiers, derives the latter from note randomness rather than the owner key, and enforces restriction checks through zero-knowledge circuits plus on-chain marker-set membership. We analyze the construction with explicit games for encumbrance integrity, settlement, key compromise, and indistinguishability, and we keep theorem-level, conditional, and empirical claims separate throughout. The revised version of this paper additionally provides explicit deployment-model templates for the registry-immutability assumption (\S\ref{sec:deployment}) and a practical-economics analysis of the position-value regime in which the construction is cost-effective (\S\ref{sec:economics}).

\subsection{What Is Proved, and at What Strength}

\begin{table}[!htbp]
\centering
\small
\renewcommand{\arraystretch}{1.15}
\begin{tabularx}{\textwidth}{@{}l L L@{}}
\toprule
\textbf{Layer} & \textbf{Contribution type} & \textbf{Main dependencies} \\
\midrule
Model layer & Formalizes \NCEE, \KS, \AAC{} in the abstract ledger setting & None beyond the paper's abstract model \\
Impossibility layer & Theorem-level: \NCEE{} impossible for \KS{} ledgers & Model layer \\
Necessity layer & Theorem-level: \AAC{} necessary for \NCEE{} in the transfer-dichotomous setting & Model layer \\
Construction layer & Theorem-level: envelope construction achievable in extended \PSLM & Stated cryptographic assumptions and AS7 \\
Security \& impl. & Conditional / empirical: indistinguishability, artifacts, benchmarks & AS5--AS6 and implementation environment \\
\bottomrule
\end{tabularx}
\caption{Contribution layers and their logical dependencies.}\label{tab:contrib}
\end{table}

This distinction matters. The paper's central contribution is the structural separation established in the formal model. The envelope matters because it witnesses the positive side of that separation under explicit assumptions. The privacy, indistinguishability, and implementation results support that picture, but they are logically downstream and should be read at their stated strength.

\subsection{Why \KS{} and \AAC{} Are the Right Abstractions}

The model does not attempt to classify all imaginable forms of conditional control, delegation, or recovery. Its scope is the standard asset setting formalized here, where the question is whether enforced encumbrance can coexist with self-custody under the four \NCEE{} properties.

The deepest question for the formal model is whether \KS{} and \AAC{} are merely useful labels or whether they capture the right structural distinction. Within the scope of this paper, we believe they do. \KS{} isolates unilateral weakening power over an already-active restriction. \AAC{} isolates the complementary fact that a valid transition cannot be detached from the mechanism's live restriction state. The point of the pair is not breadth for its own sake, but parsimony: they are among the sparsest abstractions we know that let the negative and positive sides of the paper line up in one ledger-independent statement.

Equally important is what the model does \emph{not} claim. The paper does not assert that every real-world asset system falls neatly into two exhaustive classes, nor that every hybrid middleware design is captured by these definitions. The claim is narrower: for the asset setting studied here, \KS{} captures the structural escape path that defeats \NCEE, while \AAC{} captures the structural coupling that makes the witness construction possible.

\subsection{Paper Organization}

Section~\ref{sec:prelim} fixes notation and ledger models. Section~\ref{sec:wdsep} is the center of the paper: it states the structural separation, proves impossibility under \KS, proves the necessity of \AAC{} in the transfer-dichotomous setting, and then gives the matching constructive result in the extended \PSLM. Section~\ref{sec:envelope} presents the envelope construction as the witness on the positive side. Section~\ref{sec:security} discharges the main security obligations and states assumption dependence explicitly. Sections~\ref{sec:privacy} and~\ref{sec:lending} discuss privacy scope and the lending application as downstream consequences of that witness rather than independent centers of gravity. Section~\ref{sec:related} positions the contribution relative to adjacent work. Section~\ref{sec:limitations} (new in this revision) collects limitations explicitly. The appendices collect full games, extended privacy notes, threshold extensions, benchmarks, and implementation details.

\section{Preliminaries}\label{sec:prelim}

\subsection{Cryptographic Primitives}

We use Poseidon2 throughout, instantiated in-circuit through Barretenberg's native \texttt{poseidon2\_permutation} opcode. $\mathsf{Poseidon2}_k$ denotes the width-4 permutation applied to $k$ non-zero inputs with zero padding. The protocol uses five fixed domain tags: \texttt{CM\_TAG} for commitments, \texttt{SPEND\_TAG} for spend nullifiers, \texttt{ENCUMBER\_TAG} for encumbrance nullifiers, \texttt{REPAY\_TAG} for settlement-related commitments, and \texttt{PARAMS\_TAG} for committed external parameters such as the IRM address. These are fixed protocol constants. Security arguments that model the hash as a random oracle are explicitly conditional on the indifferentiability assumption AS6.

The note commitment is
$$\cm(N) = \mathsf{Poseidon2}_4(v, r, \pk_x, \mathsf{aid}).$$
The spend nullifier is
$$\nfs(N) = \mathsf{Poseidon2}_3(\sk, \cm(N), \texttt{SPEND\_TAG}),$$
and the encumbrance nullifier is
$$\nfe(N) = \mathsf{Poseidon2}_3(r, \cm(N), \texttt{ENCUMBER\_TAG}).$$

\begin{lemma}[Computational Hiding]\label{lem:hiding}
Under AS6, $\cm(N)$ is computationally hiding when $r$ is sampled uniformly.
\end{lemma}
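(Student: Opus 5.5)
We argue in the random-oracle model licensed by AS6. Since AS6 says the $\mathsf{Poseidon2}$ construction (width-4 permutation with zero padding) is indifferentiable from a random oracle, we first replace $\mathsf{Poseidon2}_4$ by a random oracle $H$. Any distinguisher against the real scheme then yields one against the ideal scheme, up to the indifferentiability advantage. The remaining work is a statistical hiding argument for $\cm(N) = H(v, r, \pk_x, \mathsf{aid})$.

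The game is the standard one. The adversary $\mathcal{A}$ chooses two openings $(v_0,\pk_{x,0},\mathsf{aid}_0)$ and $(v_1,\pk_{x,1},\mathsf{aid}_1)$. The challenger samples $b$ and a uniform $r$ from the field $\mathbb{F}_p$, and returns $\cm = H(v_b, r, \pk_{x,b}, \mathsf{aid}_b)$. The adversary also has oracle access to $H$, making at most $q$ queries.

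The argument defines a bad event $E$: some query of $\mathcal{A}$ has second coordinate equal to $r$. Conditioned on $\neg E$, the point at which $H$ was evaluated for the challenge is fresh. The value $\cm$ is therefore a uniform field element independent of $b$ and of all of $\mathcal{A}$'s view, so $\mathcal{A}$'s advantage is zero. The probability of $E$ is at most $q/p$. This holds because $r$ is uniform and, before $E$ first occurs, $\mathcal{A}$'s view is independent of $r$. A standard hybrid, sampling $\cm$ uniformly and programming $H$ lazily, makes this precise. The overall bound is
$$\Adv^{\mathrm{hide}}_{\mathcal{A}} \le \epsilon_{\mathrm{indiff}}(q) + q/p,$$
which is negligible for $p \approx 2^{254}$.

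The main obstacle is the first step, the indifferentiability transfer. Indifferentiability guarantees composition only for single-stage games. The hiding game is single-stage, but the transfer has to go through the simulator, which mediates access to the underlying permutation. A distinguisher that queries the raw Poseidon2 permutation must be translated into one that queries only $H$ together with the simulator. This bookkeeping is where care is needed: the zero-padding and fixed-width encoding must be injective on inputs, so that distinct $(v,r,\pk_x,\mathsf{aid})$ give distinct oracle inputs. I would state injectivity of the encoding as a short side remark. I would also note that the same argument does \emph{not} need $r$ to be secret from the ledger beyond the note owner. That concern belongs to later privacy analyses and not to this lemma.
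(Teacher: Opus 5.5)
The paper gives no proof of this lemma; it is asserted as a direct consequence of AS6. The closest reasoning in the paper is the tail of the hybrid argument for Proposition~\ref{prop:eig}, which makes the same two moves you do. It replaces $\mathsf{Poseidon2}_4$ by a random oracle at cost $\varepsilon_{\mathsf{ind}}$, then bounds the chance of querying at the hidden randomizer by $Q_H/|\mathbb{F}_p|$.

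Your argument is a correct and more careful version of that reasoning. The hiding game is single-stage, so the indifferentiability composition theorem applies. The identical-until-bad step, with bad event ``some $H$-query has second coordinate $r$'', gives $|\Pr[b'=b]-\tfrac12|\le q/p$. This holds because in the lazily-sampled hybrid the challenge value is uniform and independent of both $b$ and $r$. You also use the right game for hiding: two adversarially chosen openings, rather than EIG's fixed $(v,\pk,\mathsf{aid})$ with two randomizers. You rightly carry no $\varepsilon_{\mathsf{cr}}$ term, which the EIG bound includes but which plays no role in hiding.

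Three small corrections.

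\textbf{Query count.} After the transfer, the ideal-world adversary is $\mathcal{A}^{S^H}$. So the $q$ in $q/p$ must be the number of $H$-queries the simulator issues while answering $\mathcal{A}$'s permutation queries, not $\mathcal{A}$'s own query count. The bound should read $\varepsilon_{\mathsf{ind},4} + q'/p$, with $q'$ that total.

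\textbf{The ``main obstacle''.} The step you flag needs no construction-specific bookkeeping. AS6 is posited for $\mathsf{Poseidon2}_4$ as a map on $\mathbb{F}_p^4$, so distinct tuples are distinct oracle inputs by definition. The composition theorem then handles the simulator wholesale. Zero-padding ambiguity would matter only for a joint statement about several widths sharing one permutation.

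\textbf{Closing remark.} This is misphrased: the argument does need $r$ hidden from the adversary, since knowing $r$ it simply recomputes both candidate commitments. What lies outside the lemma is whether other published values leak $r$. The main example is $\nfe(N)$, a second hash of $r$ posted on-chain at registration. Covering it would require extending the random-oracle replacement and the bad event to $\mathsf{Poseidon2}_3$ as well.
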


We assume a zk-SNARK system $\Pi = (\mathsf{Setup}, \mathsf{Prove}, \mathsf{Verify})$ with completeness, knowledge soundness, and computational zero knowledge. The reference instantiation uses Noir compiled to ACIR and proved with UltraHonk.

Merkle trees are binary with Poseidon2 internal nodes. $\mathsf{VerifyPath}(\mathsf{root}, \mathsf{leaf}, \pi)$ denotes standard membership verification.

\subsection{Account-Based Ledger Model (\ABLM)}

\begin{definition}[\ABLM]\label{def:ablm}
An account-based ledger model is $(A, S, T, \to)$, where $A$ is the address set, $S$ the global state map over accounts, $T$ the transaction type set, and $\to$ the state transition function.
\end{definition}

\begin{definition}[Write Domain]\label{def:wd}
For secret key $\sk$, the write domain $W(\sk, S)$ is the set of account fields that can be modified by some valid transaction authorized solely by $\sk$.
\end{definition}

\begin{fact}\label{fact:owner}
In an \ABLM, the key holder's own account state lies in its write domain.
\end{fact}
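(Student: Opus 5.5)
The plan is to unfold Definition~\ref{def:wd} and check membership directly. We exhibit, for an arbitrary account $a \in A$ controlled by $\sk$ and an arbitrary field $f$ of $S(a)$, a single valid transaction in $T$ that is authorized solely by $\sk$ and whose image under $\to$ changes $f$. The modelling convention that gives this content is the one implicit in Definition~\ref{def:ablm}: an account is \emph{controlled} by $\sk$ exactly when transactions originating from $a$ are validated by a signature (or signature-equivalent check) under $\sk$ and by nothing else. No third-party co-signature and no protocol-side state check is part of validity.

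The argument proceeds field by field over the state components an account carries in an \ABLM{}: balance, nonce, and (where present) code and storage.
\begin{enumerate}
  \item \textbf{Balance.} A value transfer from $a$ to any address, signed under $\sk$, is valid whenever the amount is within balance, and it decrements the balance of $a$.
  \item \textbf{Nonce.} Any valid transaction from $a$ increments its nonce, so the nonce lies in $W(\sk,S)$ as well.
  \item \textbf{Code and storage.} For smart accounts, and for EIP-7702 delegated EOAs, the relevant transaction is either a call into the account's own execution path, which the account's validation logic accepts on an $\sk$-signature, or a fresh authorization tuple signed by $\sk$, which resets the delegation designator.
\end{enumerate}
Since $W(\sk,S)$ is defined existentially, one witnessing transaction per field suffices.

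The main obstacle is that the Fact is stated for a generic \ABLM, while the abstract tuple $(A,S,T,\to)$ does not by itself pin down that $\sk$-signed transactions are sufficient for validity. I would therefore make the argument explicitly conditional on this standard authorization convention, treating it as the defining property of the \ABLM{} rather than something derived.

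I would then note that a field guarded by contract logic the owner cannot satisfy falls outside the owner's own account state in the sense intended. Examples are another contract's storage and token balances held in a separate ledger contract. This scoping is what keeps the Fact consistent with the later analysis of \KS{}, where the delicate cases (ERC-4337 and EIP-7702) are handled separately.
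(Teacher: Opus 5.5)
The paper gives no argument for Fact~\ref{fact:owner}. It is stipulated as part of what an \ABLM{} is and invoked exactly once, in Lemma~\ref{lem:ablm-ks}, for the owner's balance-bearing state. Your decision to make $\sk$-sufficiency of authorization definitional rather than derived is therefore the paper's own stance. Your balance bullet, an $\sk$-signed transfer to an \emph{arbitrary} destination, is the only witness the later argument uses. The problems are in what you add on top of that.

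First, bullet~3 is false as stated for ERC-4337 contract accounts. Post-Cancun, a deployed contract's code cannot be changed by any transaction. An upgradeable proxy only rewrites an implementation pointer held in storage, and only if its code exposes such a path. More generally, a storage slot is writable only along paths the deployed code provides. The fresh-authorization-tuple move applies to EIP-7702 delegation designators on EOAs, not to contract accounts. So the claim that every field of $S(a)$ lies in $W(\sk,S)$ fails in general, and your one-witness-per-field argument has no witness for those fields.

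Second, your scoping example points the wrong way. Balances in a standard ERC-20/ERC-721 contract keyed by $a$ are decremented by a \texttt{transfer} that authorizes on \texttt{msg.sender} being $a$. They are therefore not guarded by logic the owner cannot satisfy, and Corollary~\ref{cor:eoa} treats exactly these slots as spendable on the owner's signature alone. Excluding them would stop Lemma~\ref{lem:ablm-ks} from covering token-held assets.

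The clean version has three parts:
\begin{itemize}
\item restrict the Fact to the balance-bearing state of assets controlled by $\sk$, meaning the native balance and token slots keyed by $a$ in contracts that authorize on caller identity;
\item drop the code/storage claim;
\item locate the delicate ERC-4337 and EIP-7702 cases in whether the account's authorization predicate really reduces to $\sk$ alone, which is what Corollaries~\ref{cor:4337} and~\ref{cor:7702} examine, rather than in which fields count as the owner's state.
\end{itemize}
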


\begin{definition}[Key Sovereignty in \ABLM{} Form]\label{def:ks-ablm}
A ledger satisfies key sovereignty if whenever a protocol-imposed restriction is active on an asset controlled by $\sk$, there exists a transaction authorized solely by $\sk$ that weakens that restriction.
\end{definition}

\begin{lemma}\label{lem:ablm-ks}
Every \ABLM{} satisfies key sovereignty.
\end{lemma}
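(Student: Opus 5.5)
The argument rests on Fact~\ref{fact:owner} together with Definitions~\ref{def:wd} and~\ref{def:ks-ablm}. First I will fix what a ``protocol-imposed restriction on an asset controlled by $\sk$'' means inside an \ABLM{} $(A,S,T,\to)$. Let $a\in A$ be the account whose controlling key is $\sk$. Asset control in an \ABLM{} is recorded in account state: the balance field of $a$, or the entry keyed by $a$ in a token contract's balance map. Any restriction the protocol imposes is therefore a predicate on $S$. It can involve fields of $a$ itself, such as a delegation pointer, code, or an approval/lock flag. It can also involve fields of a mechanism account that read $a$'s state but do not gate $a$'s native transfer path. I will split into these two cases according to where the restricting field lives.

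\emph{Case 1: the restriction is stored in $a$'s own state.} By Fact~\ref{fact:owner}, every field of $a$ lies in $W(\sk,S)$. So there is a valid transaction $t\in T$, authorized solely by $\sk$, with $S\xrightarrow{t}S'$ changing that field. I choose $t$ to be the write that resets the field to its unrestricted value (revoking the approval, redelegating, or clearing the flag). The restriction predicate is then false or strictly weaker in $S'$, so $t$ witnesses Definition~\ref{def:ks-ablm}.

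\emph{Case 2: the restriction is stored outside $a$.} Here the asset has not left $a$'s custody: the balance is still in $a$'s state and is still moved by $a$'s native transfer transaction. That transaction's validity in an \ABLM{} depends only on a signature under $\sk$ and on $a$'s own fields (nonce, balance). It does not depend on the external mechanism's state. Thus the transfer moving the asset to an address of the owner's choosing is authorized solely by $\sk$. It takes the asset outside the mechanism-defined paths, and this escape is what I will count as weakening the restriction. This step needs the modelling convention, implicit in Definition~\ref{def:ablm}, that transition validity for an account's assets is decided by that account's authorization alone. I will state this convention explicitly as part of the \ABLM{} semantics.

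The main obstacle is Case~2. Concretely, I must rule out that ``the asset is controlled by $\sk$'' is compatible with the mechanism's state gating $a$'s transfers. If it were gated, the asset would effectively be custodial, or the key would no longer be the apex of authorization. I will handle this by making explicit that, in an \ABLM, control by $\sk$ means native transfers of the asset are authorized by $\sk$ alone. Under that reading, the gating scenario contradicts the hypothesis of Definition~\ref{def:ks-ablm}, and the lemma follows.
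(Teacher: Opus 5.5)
Your proposal is correct and takes the same approach as the paper, whose proof is essentially your Case~2: Fact~\ref{fact:owner} puts the owner's balance-bearing state in $W(\sk,S)$, so a $\sk$-only transfer to a fresh address outside the restriction's scope weakens the restriction. Because the paper applies that fact uniformly to the balance field, whatever state holds the restriction, it needs neither your Case~1 nor a separate convention; the non-gating assumption you plan to state explicitly is what the paper takes from Fact~\ref{fact:owner}.
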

\begin{proof}
By Fact~\ref{fact:owner}, the owner's balance-bearing account state lies in the owner's write domain. Hence there exists a valid transaction authorized solely by the owner's key that transfers the asset to a fresh address outside the protocol's restriction scope. That transaction weakens the active restriction, so the ledger satisfies key sovereignty.
\end{proof}

\subsection{Private-State Ledger Model (\PSLM)}

\begin{definition}[\PSLM]\label{def:pslm}
A private-state ledger model is $(N, C, \Sigma, S_{\mathsf{chain}}, T, \to)$, where $N$ is the note space, $C$ the commitment set, $\Sigma$ the spend-nullifier set, and $S_{\mathsf{chain}}$ the public chain state.
\end{definition}

\begin{definition}[Note]\label{def:note}
A note is $N = (v, r, \pk, \mathsf{aid})$ with commitment $\cm(N)$. It is spendable if $\cm(N) \in C$ and $\nfs(N) \notin \Sigma$.
\end{definition}

For the envelope construction we extend the \PSLM{} with two protocol-maintained sets:
\begin{itemize}
\item $\Mactive$, the active encumbrance-marker set;
\item $\Mtomb$, the consumed-marker set.
\end{itemize}
By construction, no user transaction authorized solely by $\sk$ can write to these sets; that separation is made precise by AS7.

\subsection{Non-Custodial Enforced Encumbrance}

\begin{definition}[\NCEE]\label{def:ncee}
A mechanism satisfies \NCEE{} if it simultaneously achieves:
\begin{description}
\item[\textbf{P1 Self-custody:}] the owner retains the key and no third party is a required co-signer;
\item[\textbf{P2 Transition restriction:}] the asset may move only along mechanism-defined paths while encumbered;
\item[\textbf{P3 Irrevocability:}] the owner cannot unilaterally remove the encumbrance;
\item[\textbf{P4 Permissionless enforcement:}] any third party may trigger enforcement when the public condition is satisfied.
\end{description}
\end{definition}

\section{Write-Domain Separation and Impossibility}\label{sec:wdsep}

\paragraph{Proof roadmap.} This section establishes the structural spine of the paper. Theorem~\ref{thm:imposs} gives the negative result: any \KS{} asset path fails \NCEE, and its proof does not require transfer-dichotomy. Theorem~\ref{thm:aac-nec} gives the necessity direction: in the transfer-dichotomous setting studied here, \NCEE{} requires \AAC. Corollaries~\ref{cor:eoa}--\ref{cor:7702} instantiate \KS{} for standard Ethereum-style asset paths with worked examples (revised in this draft to walk through each abstraction at code level). Theorem~\ref{thm:pslm-ach} then witnesses achievability in the extended \PSLM{} via the envelope construction. The model-level theorems of this section are proved inline; the cryptographic obligations on which Theorem~\ref{thm:pslm-ach} relies are discharged in Section~\ref{sec:security}, with the underlying game definitions collected in Appendix~\ref{app:games}.

\subsection{Key Sovereignty and Asset-Authorization Coupling}

\begin{definition}[Key Sovereignty Axiom, \KS]\label{def:ks}
A ledger model satisfies \KS{} if the holder of the controlling key can always authorize a valid transaction that weakens any active protocol-imposed restriction on the asset.
\end{definition}

\begin{definition}[Asset-Authorization Coupling, \AAC]\label{def:aac}
An asset type has \AAC{} if valid asset transitions are cryptographically coupled to the mechanism's current restriction state, so that the owner key alone cannot move the asset independently of that state.
\end{definition}

\begin{definition}[Transfer-Dichotomous Asset Class]\label{def:dich}
An asset class is \emph{transfer-dichotomous} if each valid spend path falls on one of two sides: either it preserves a unilateral owner-authorized weakening path of the \KS{} kind, or it requires coupling to the mechanism's live restriction state of the \AAC{} kind.
\end{definition}

\begin{theorem}[Impossibility in \KS{} Ledgers]\label{thm:imposs}
No ledger model satisfying \KS{} can support \NCEE.
\end{theorem}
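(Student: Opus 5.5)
The argument is by contradiction and is essentially definitional: we show that \KS{} directly produces a witness violating P3, and in the relevant case also P2. Suppose a ledger model satisfies \KS{} and some mechanism $\mathcal{M}$ on it achieves \NCEE. Fix an asset controlled by a key $\sk$ and an encumbrance created by $\mathcal{M}$ that is currently active. By P1, the owner still holds $\sk$ and no third party is a required co-signer. So the owner's authorization path is exactly the native one the ledger model gives to $\sk$, with no additional signer interposed.

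Next, invoke Definition~\ref{def:ks}. Because the encumbrance is an active protocol-imposed restriction on an asset controlled by $\sk$, there is a valid transaction $\tau$, authorized solely by $\sk$, that weakens the restriction. (In the \ABLM{} instance, Lemma~\ref{lem:ablm-ks} makes this concrete: $\tau$ transfers the asset to a fresh address outside the restriction's scope.) We then split on what ``weakens'' means for $\tau$.

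\begin{itemize}
\item If $\tau$ removes or relaxes the encumbrance itself, then the owner has unilaterally revoked it. This contradicts P3 directly.
\item If $\tau$ instead moves the asset along a path the mechanism did not define, then the asset has left mechanism-defined paths while encumbered. This contradicts P2.
\end{itemize}

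We also note that P4 fails as a consequence. After $\tau$ is executed, a third party triggering enforcement once the public condition holds can no longer reach the asset: the only admissible transitions are the mechanism's, and those are bound to the original asset location or state. This last point is optional, since P2 or P3 already fails.

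The main obstacle is making ``weakens'' precise enough that the case split is exhaustive. The paper does not formally define it. I would fix a reading under which a restriction $R$ is weakened by $\tau$ if the set of transitions admissible after $\tau$ strictly includes some transition that $R$ forbade before $\tau$, counting relocation of the asset out of $R$'s scope. Under this reading, any forbidden transition enabled by $\tau$ is either itself a non-mechanism movement of the asset, which violates P2, or a removal of the restriction that permits such a movement later, which violates P3. I would also stress that nothing in the argument uses transfer-dichotomy (Definition~\ref{def:dich}). This is consistent with the roadmap's claim that Theorem~\ref{thm:imposs} holds without that hypothesis.
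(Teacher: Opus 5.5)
Your proposal is correct and is essentially the paper's argument: \KS{} gives the owner a valid transaction, authorized solely by $\sk$, that weakens the active encumbrance, and this immediately contradicts the \NCEE{} enforcement properties without any appeal to transfer-dichotomy. The only difference is that the paper counts this directly as a P3 violation, whereas you treat relocation of the asset (e.g.\ the \ABLM{} transfer to a fresh address) as a separate P2 case and fix a working reading of ``weakens''; that is a slightly more careful version of the same one-step contradiction, not a different route.
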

\begin{proof}
If \KS{} holds, then whenever a mechanism purports to impose an irrevocable encumbrance, the owner can still authorize a transaction that weakens the restriction. That contradicts P3. Hence all four \NCEE{} properties cannot hold simultaneously.
\end{proof}

\begin{remark}\label{rem:no-dich}
Theorem~\ref{thm:imposs} does \emph{not} require transfer-dichotomy. It is a direct impossibility statement about any ledger model satisfying \KS.
\end{remark}

\begin{theorem}[\AAC{} Is Necessary for \NCEE{} in the Transfer-Dichotomous Setting]\label{thm:aac-nec}
If a ledger model supports \NCEE{} for an asset type in the transfer-dichotomous setting studied here, then the relevant asset type must have \AAC.
\end{theorem}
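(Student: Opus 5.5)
The plan is to argue by contraposition, using the transfer-dichotomy (Definition~\ref{def:dich}) to reduce the claim to Theorem~\ref{thm:imposs}. Fix a ledger model and an asset type in the transfer-dichotomous setting, and suppose a mechanism realizes \NCEE{} for that asset type. Suppose, for contradiction, that the asset type does \emph{not} have \AAC{} in the sense of Definition~\ref{def:aac}. Then some valid asset transition is not cryptographically coupled to the mechanism's current restriction state. Equivalently, there is a spend path along which the owner key alone can move the asset independently of that state.

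The first step is to classify this uncoupled spend path. By transfer-dichotomy, every valid spend path falls on exactly one of two sides. Either it requires coupling to the live restriction state (the \AAC{} side), or it preserves a unilateral owner-authorized weakening path of the \KS{} kind. The path we isolated is by hypothesis not of the coupled kind, so it must be of the second kind. Hence, while the encumbrance is active, there exists a transaction authorized solely by $\sk$ that weakens the active protocol-imposed restriction.

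The second step is to derive the contradiction. Here I would not invoke Theorem~\ref{thm:imposs} wholesale, because \KS{} is stated for the whole ledger model, whereas we have obtained it only for one asset path. Instead I would rerun its one-line argument locally. The owner-authorized weakening transaction either removes the encumbrance unilaterally, which contradicts P3, or moves the asset off a mechanism-defined path while it is encumbered, which contradicts P2. Under P1 no third-party co-signer can block this transaction, since the owner retains sole control of the key. So \NCEE{} fails for this asset type, contradicting the assumption, and therefore the asset type has \AAC.

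The main obstacle is the classification step. It requires that ``not \AAC'' for the asset type produces a specific spend path that is not of the coupled kind. The subtlety is that Definition~\ref{def:aac} is phrased globally, as ``valid asset transitions are coupled,'' while Definition~\ref{def:dich} is phrased per path. I would handle this by reading \AAC{} as ``every valid spend path is of the coupled kind,'' which makes its negation yield a witnessing path, and I would state this reading explicitly as the interpretation under which the theorem is proved. A secondary care point is ruling out the degenerate case of an asset with no uncoupled valid path while the owner still has some weakening power. Dichotomy excludes this case, since any such weakening power is itself a \KS-kind path.
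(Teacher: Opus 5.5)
Your proposal is correct and matches the paper's proof: assume \NCEE{} holds but \AAC{} fails, use transfer-dichotomy to place the uncoupled spend path on the \KS{} side, and derive a contradiction from the impossibility argument. The only difference is that you rerun the argument of Theorem~\ref{thm:imposs} locally for that one path (appealing to P2 as well as P3) rather than citing the ledger-level theorem directly, which is a harmless and slightly more careful variant; your appeal to P1 and the ``degenerate case'' remark are not needed.
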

\begin{proof}
Suppose a mechanism supports \NCEE{} for an asset type in the transfer-dichotomous setting, but \AAC{} fails. Since the setting is transfer-dichotomous, failure of \AAC{} means that the relevant spend path lies on the \KS{} side of the dichotomy: the owner retains a unilateral transition that weakens the active restriction without consulting the mechanism's live restriction state. But then Theorem~\ref{thm:imposs} applies and rules out \NCEE, a contradiction. Therefore any asset type that supports \NCEE{} in this setting must have \AAC.
\end{proof}

\begin{remark}\label{rem:dich-needed}
Theorem~\ref{thm:aac-nec} is the step that does require transfer-dichotomy. Its role is to turn the negative result of Theorem~\ref{thm:imposs} into a necessity statement for \AAC{} within the asset setting analysed here.
\end{remark}

\subsubsection*{Concrete Account-Class Instantiations (Expanded in This Revision)}

We now instantiate \KS{} for the three Ethereum-canonical account abstractions. The original draft of this paper presented each as a single short paragraph; reviewers correctly observed that the \emph{interesting} content of these corollaries is whether each abstraction's specific construct (validation logic refinement, executor delegation, code injection) closes the \KS{} escape. We therefore expand each below with a concrete walk-through.

\begin{corollary}[Standard EOAs]\label{cor:eoa}
Standard EOAs satisfy \KS{} and therefore cannot support \NCEE{} for standard account-held asset paths.
\end{corollary}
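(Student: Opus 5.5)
The plan is to reduce Corollary~\ref{cor:eoa} to Theorem~\ref{thm:imposs} by showing that the standard EOA asset path satisfies \KS{} (Definition~\ref{def:ks}). Once \KS{} is established, Theorem~\ref{thm:imposs} gives the conclusion that \NCEE{} fails. Nothing beyond that theorem is needed, so the whole argument is a verification that EOAs meet the hypothesis of Lemma~\ref{lem:ablm-ks}, with the lemma's abstract conclusion then read in the \KS{} form.

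First, I will model a standard EOA as an instance of the \ABLM{} (Definition~\ref{def:ablm}). The address is derived from the key, and validity of a transaction from that address depends only on two things: a signature under $\sk$ and the nonce/balance checks of the protocol. Under this modelling, the account's native-asset balance, and for token-held assets the \texttt{transfer} entry point invoked with \texttt{msg.sender} equal to the EOA, lie in the write domain $W(\sk,S)$ of Definition~\ref{def:wd}. This is exactly Fact~\ref{fact:owner} specialised to EOAs.

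Second, I will handle an arbitrary active protocol-imposed restriction. Whatever a mechanism records (an off-chain agreement, a registry entry, a flag in some lending contract), an EOA has no validation code. The protocol's transaction-validity predicate therefore never reads that restriction. It follows that the owner can sign a transaction moving the asset to a fresh address outside the mechanism's scope. Following the proof of Lemma~\ref{lem:ablm-ks}, this transaction weakens the restriction, so \KS{} holds.

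The main obstacle is the token case in which the asset's own contract could consult the restriction, for example an ERC-20 with a transfer hook or allowance lock. I will address this by restricting attention to the \emph{standard} account-held asset path named in the statement. On that path the token's transfer logic checks only the balance and the sender, so its validity is not coupled to mechanism state. A token whose transfer does consult the restriction is the \AAC{} side of Definition~\ref{def:dich}, and I will explicitly exclude it as non-standard. With that scoping fixed, applying Theorem~\ref{thm:imposs} completes the proof.
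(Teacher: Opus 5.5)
Your proposal is correct and follows the paper's proof. The EOA's native authorization depends only on signature and nonce, and the standard asset's transfer logic (balance field or token contract keyed by the address) never consults the mechanism's storage, so $\sk$ alone can move the asset to a fresh address outside the mechanism's scope, which gives \KS{} and hence the failure of \NCEE{} via Theorem~\ref{thm:imposs}. The differences are cosmetic: you route the write-domain step through Fact~\ref{fact:owner} and Lemma~\ref{lem:ablm-ks} rather than arguing directly at the EVM level, and you explicitly exclude tokens whose transfer logic consults the restriction, which the paper leaves implicit in ``not against $M$'s storage'' and in the remark that follows the proof.
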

\begin{proof}
A standard EOA is an address $a = \mathsf{H}(\pk)$ associated with a secret signing key $\sk$. The native authorization path is: any RLP-encoded transaction whose ECDSA signature recovers to $\pk$ produces a valid state transition. For an asset held in $a$'s balance field (ETH) or in a token contract slot keyed by $a$ (ERC-20, ERC-721), the spending semantics reduce to: transfer to a destination of the owner's choice, conditioned only on signature validity and nonce ordering.

Suppose a mechanism $M$ purports to impose an active restriction $R$ on the asset (for instance, by recording a flag in $M$'s storage or by issuing the owner a non-transferable token claim against the asset). Because the asset's transfer path is checked against signature validity and account state internal to the asset's own contract, not against $M$'s storage, the owner can sign a transfer transaction to a fresh address $a'$ that bypasses $M$ entirely. The destination $a'$ is outside $M$'s restriction scope by construction. Therefore $\sk$ alone authorizes a transition that weakens $R$. That is exactly a \KS{}-style unilateral weakening path, so the claim follows from Theorem~\ref{thm:imposs}.
\end{proof}

\begin{remark}
The intuition is that EOAs expose the asset's transfer path \emph{below} any policy layer a mechanism could install above them. Token approvals, allowances, and protocol locks all live at the token-contract layer; they do not constrain the EOA's ability to transfer ownership of an asset whose transfer logic is written into a separate contract that does not consult $M$'s state.
\end{remark}

\begin{corollary}[ERC-4337 Smart Accounts in the Single-Key Setting]\label{cor:4337}
ERC-4337 smart accounts whose authorization policy reduces to a single owner-controlled signing key (the ``standard single-key setting'') satisfy \KS{} for standard asset paths and therefore cannot support \NCEE{} for those asset paths.
\end{corollary}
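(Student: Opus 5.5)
The plan mirrors the proof of Corollary~\ref{cor:eoa}: exhibit, for any active restriction $R$ imposed by a mechanism $M$, a valid transition authorized solely by the owner's key that weakens $R$. That establishes \KS{} in the sense of Definition~\ref{def:ks}, and Theorem~\ref{thm:imposs} then rules out \NCEE{}. The only new work is to show that the ERC-4337 machinery does not close the escape path. That machinery consists of the account's \texttt{validateUserOp} logic, the EntryPoint, bundlers, paymasters, and any upgradeable implementation behind the account.

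First, I will fix the model. A smart account $a$ holds the asset, either as ETH balance or as token-contract slots keyed by $a$. The account's validation function accepts a \texttt{UserOperation} iff a signature verifies under the single owner key $\pk$. The account's execute function performs an arbitrary call chosen in the signed \texttt{callData}. This is exactly the standard single-key setting of the statement. I will make explicit that ``reduces to a single owner-controlled key'' means validation consults no state that $M$ can write without the owner's signature.

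Second, I will construct the escape. The owner signs a \texttt{UserOperation} whose \texttt{callData} invokes \texttt{execute} with a transfer of the asset to a fresh address $a'$ outside $M$'s scope. Validation passes on signature and nonce alone. The EntryPoint is permissionless, so any bundler (or the owner acting as bundler) can submit it. The token contract checks only that $\mathsf{msg.sender}=a$. Hence $\sk$ alone authorizes a transition weakening $R$.

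Third, I will treat the refinement cases, which I expect to be the main obstacle. The refinements are these: $M$ installed a module or guard in the account's validation or execution path; the account is upgradeable; or $M$ acts as paymaster. In each case I will argue that the installation itself lies in the owner's write domain, in the sense of Fact~\ref{fact:owner} and Definition~\ref{def:wd}. Under single-key authorization, module removal, \texttt{upgradeTo}, and replacement of the validation logic are themselves owner-signed \texttt{UserOperation}s. A two-step sequence therefore works: uninstall or upgrade, then transfer. This is still a transition authorized solely by $\sk$.

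The paymaster case is easier. The owner can pay gas directly from $a$ or through a different paymaster, so $M$'s paymaster is never a required participant.

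If $M$'s guard is made non-removable, or requires $M$'s co-signature, the account is no longer in the single-key setting. Alternatively, $M$ becomes a required co-signer, violating P1. I will note this boundary explicitly so that the corollary's hypothesis is seen to exclude exactly those configurations.

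Concluding via Theorem~\ref{thm:imposs} then completes the argument.
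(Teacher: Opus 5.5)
Your proposal is correct and takes essentially the same route as the paper. The core step is an owner-signed UserOperation whose \texttt{execute} call transfers the asset to an address outside $M$'s scope, with the token contract authorizing on \texttt{msg.sender} alone; you then observe that any wallet-level validation rule, module, or upgrade is itself under $\sk$'s authority and so can be replaced or sidestepped, and conclude via Theorem~\ref{thm:imposs}. Your explicit treatment of bundlers, paymasters, and the non-removable-guard boundary is a more careful rendering of the paper's ``ultimate authority is $\sk$'' argument and its scope clarification, not a different approach.
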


\paragraph{Scope clarification.} ERC-4337 admits modular wallets with diverse authorization logic, including multi-signature threshold accounts, social-recovery wallets where the owner key alone is insufficient to authorize transfers, and accounts with externally-imposed module locks. In such setups the ``controlling key'' is no longer a single secret known to one party, and the \KS{} question must be re-asked for the actual authorization predicate. Corollary~\ref{cor:4337} addresses the standard single-key configuration that dominates current ERC-4337 deployments; richer modular configurations may admit hybrid analyses outside the transfer-dichotomous scope (\S\ref{sec:limitations}).

\begin{proof}
ERC-4337 introduces a wallet-as-contract abstraction with two relevant interfaces: \texttt{validateUserOp} (validation logic the wallet runs to admit a UserOperation) and the wallet's own \texttt{execute} entrypoint (which dispatches the validated operation). Both are wallet-internal: the wallet may impose any restriction it likes on the operations it admits.

Consider a wallet contract $W$ in the standard single-key setting, with controlling key $\sk$. An asset held directly by $W$ (ETH balance, ERC-20 allowance owned by $W$) can be transferred only by $W$'s \texttt{execute} path. So far this looks like progress: the wallet can in principle refuse operations that violate a protocol restriction.

However, the relevant asset for \NCEE{} is typically not held by $W$ alone. Standard asset paths route through \emph{token contracts} (ERC-20, ERC-721) whose \texttt{transfer} and \texttt{transferFrom} entry points authorize on the basis of \texttt{msg.sender} ownership and approvals, not on the basis of any policy enforced by $W$. The owner can always (a) deploy a fresh wallet $W'$ via the standard ERC-4337 factory under the same controlling key, or (b) submit a UserOperation through $W$ whose \texttt{execute} call invokes the token contract's \texttt{transfer} to a destination outside the mechanism's scope. The \texttt{validateUserOp} hook can refuse \emph{some} operations, but in the single-key setting it cannot refuse all transfer paths because the owner controls $W$'s validation rules and can redeploy them through a transaction authorized by $\sk$.

More formally: in the single-key setting, \texttt{validateUserOp} and \texttt{execute} reduce to predicates whose ultimate authority is $\sk$. A mechanism $M$ that records a restriction in its own storage cannot bind $W$'s rules without $W$'s cooperation, and the holder of $\sk$ can always replace, upgrade, or sidestep those rules through a fresh deployment or upgrade authorized by $\sk$. Therefore the \KS{}-style escape persists for the standard asset setting.

ERC-4337 \emph{can} refine validation logic and execution policy. What it cannot do, in the standard single-key asset setting, is remove the owner's underlying ability to route the asset through a path authorized by the controlling key. The relevant asset path therefore still satisfies \KS, and Theorem~\ref{thm:imposs} applies.
\end{proof}

\begin{remark}
The subtle point is that ERC-4337 moves the policy layer up to the wallet contract but does not move the \emph{asset transfer path} up. Token contracts continue to authorize on caller identity, and the caller is whoever the wallet's \texttt{execute} routes the call as. The mechanism's restriction state is not consulted in that authorization. A construction that genuinely closed \KS{} would require routing the asset \emph{through} a contract that consults the mechanism's restriction state on every transition --- which is precisely what \AAC{} requires, and what the envelope construction provides.
\end{remark}

\begin{remark}[Multi-signature and social-recovery wallets]
For a Safe-style multi-signature wallet with $m$-of-$n$ threshold, the relevant authorization is the threshold predicate, not a single key. If the threshold is owner-only (the owner holds $\geq m$ of the $n$ keys), the analysis above applies and \KS{} holds for the owner. If the threshold genuinely requires non-owner signers, the wallet has effectively introduced co-signers, which violates \NCEE{}-P1 (no required third-party co-signer). Either way, multi-signature wallets do not provide a path to \NCEE{} that escapes the structural argument; they shift which property fails.
\end{remark}

\begin{corollary}[EIP-7702 Delegated EOAs]\label{cor:7702}
Post-Pectra EIP-7702 delegated EOAs satisfy \KS{} for standard asset paths and therefore cannot support \NCEE{} for those asset paths.
\end{corollary}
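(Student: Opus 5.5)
The plan follows the template of Corollaries~\ref{cor:eoa} and~\ref{cor:4337}. First I exhibit, for a delegated EOA with controlling key $\sk$, a transaction authorized solely by $\sk$ that weakens any active protocol-imposed restriction $R$ recorded by a mechanism $M$. That shows \KS{} holds for the standard asset path, and Theorem~\ref{thm:imposs} then rules out \NCEE. The new ingredient, compared with the two earlier corollaries, is the delegate code: after EIP-7702 the EOA address $a$ executes the code of a delegate contract $D$. A mechanism might try to close the escape by choosing a $D$ that refuses any outgoing call violating $R$. The proof has to show that this attempt fails.

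\textbf{Step 1: delegation is installed and replaced at the protocol level.} A delegation is set by an authorization tuple $(\mathsf{chainId}, \mathsf{address}, \mathsf{nonce})$ signed by $\sk$ and carried in a type-4 (set-code) transaction. The client processes this tuple \emph{before} execution begins. It recovers the signer, checks the nonce, and writes the delegation indicator into $a$'s code field. No code at $a$ runs during this step, and in particular $D$'s code does not run. So the current delegate cannot intercept, veto, or condition the processing of a new tuple. This is the structural fact named in the abstract.

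\textbf{Step 2: the escape path.} The owner signs a tuple that sets the delegation to the null address, which restores a plain EOA, or to a permissive delegate $D'$. The same type-4 transaction, or a later ordinary transaction signed by $\sk$, then calls the token contract's \texttt{transfer} or \texttt{transferFrom}, or moves native ETH, to a fresh address $a'$ outside $M$'s scope. Three points make this work. Native ETH transfers from $a$ are authorized by signature and nonce alone, as in Corollary~\ref{cor:eoa}. ERC-20 and ERC-721 contracts authorize on \texttt{msg.sender}$=a$ and never consult $M$'s storage. And $\sk$ itself remains a valid ECDSA key for $a$ even while $D$ is installed, so a transaction signed by $\sk$ can originate from $a$ without going through $D$'s logic at all. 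Each of these routes is authorized solely by $\sk$ and weakens $R$, so \KS{} holds and Theorem~\ref{thm:imposs} applies.

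\textbf{Main obstacle.} The hardest part is making Step~1 rigorous, namely showing that no choice of $D$ can block the redelegation. Two counterarguments need answers. One is that $D$ might keep the restriction in $a$'s storage. The reply is that storage persists but is only enforced by code, and the new code need not read it. The other is that $M$ might require the asset to sit under a lock held by $M$. The reply is that this amounts to custody, which violates P1, or else to coupling the token to $M$'s state, which is \AAC{} and lies outside the standard asset path. To close the argument I would cite the EIP-7702 processing order (authorization list processed before the call frame), and I would note that no on-chain revocation mechanism exists for the key $\sk$ itself.
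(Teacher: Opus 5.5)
Your proposal is correct and follows the paper's proof: your Step 1 is the paper's Fact 1 (new authorization tuples are processed at the protocol layer, so the current delegate $D$ cannot intercept redelegation), and your point that ERC-20/ERC-721 contracts authorize on \texttt{msg.sender} is its Fact 2, after which Theorem~\ref{thm:imposs} is applied. You also observe that $\sk$ can still originate ordinary transactions from the delegated address without $D$ executing at all; this is a more direct bypass than the paper's redelegation route, but it does not change the structure of the argument.
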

\begin{proof}
EIP-7702 lets an EOA install code at its own address by signing an authorization tuple $(\mathsf{chain\_id}, \mathsf{nonce}, \mathsf{delegate\_addr})$. The signature is verified at the EVM transaction-validation layer, before any user-level execution; the signed tuple updates the account's code pointer, after which calls to the EOA are dispatched through the installed delegate code $D$.

Two structural facts preserve \KS, and the \emph{first} is the load-bearing one for the impossibility argument:

\textbf{Fact 1 (protocol-level re-authorization is uninterceptible).} A new EIP-7702 authorization tuple is processed at the protocol layer of EVM transaction validation, \emph{not} inside the currently-active delegate $D$. Even if $D$'s logic is hostile to re-delegation --- for instance, even if $D$'s entrypoints unconditionally revert --- $D$ has no mechanism by which to intercept or prevent the owner from signing and submitting a fresh authorization tuple that replaces $D$ with a different delegate $D'$ (or removes the delegation entirely). The transaction validating the new authorization does not call into $D$'s code at all. This is the structural reason that EIP-7702 cannot achieve irrevocability: the very layer that would need to refuse re-authorization (the protocol layer) is not programmable from within the delegate.

\textbf{Fact 2 (token contracts do not consult $D$).} Standard asset paths route through token contracts whose \texttt{transfer} entry points authorize on caller identity. If $D$ refuses to dispatch a transfer, the owner installs $D'$ via Fact~1 and routes through it; equivalently, the asset's transfer logic in the token contract does not consult $D$'s policy at all.

Hence the owner of $\sk$ retains a \KS-style unilateral weakening path: they sign a fresh authorization tuple at the protocol layer, install permissive delegate code, and execute the transfer. Theorem~\ref{thm:imposs} therefore applies, and \NCEE{} fails for standard asset paths under EIP-7702.
\end{proof}

\begin{remark}
EIP-7702 is sometimes presented as ``account abstraction without a contract wallet.'' For the \NCEE{} question its structural limitation is sharper than ERC-4337's: even a delegate $D$ that exposes \emph{no} permissive entrypoints cannot enforce restrictions, because the owner can simply replace it with one that does. The protocol-layer authorization is the load-bearing escape route, not the delegate's policy.
\end{remark}

\begin{corollary}[Model-Scoped Characterization]\label{cor:char}
For the transfer-dichotomous asset classes considered in this paper, \NCEE{} is realizable only for asset types whose spend path has \AAC, and the envelope provides such a witness in the extended \PSLM.
\end{corollary}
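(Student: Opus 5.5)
The corollary has two conjuncts, and I will prove them separately.

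The first conjunct is a necessity statement: in the transfer-dichotomous setting, \NCEE{} is realizable only for asset types whose spend path has \AAC. This is a direct restatement of Theorem~\ref{thm:aac-nec}. Fix a transfer-dichotomous asset class and suppose some mechanism realizes \NCEE{} for an asset type in it. Theorem~\ref{thm:aac-nec} then says that asset type has \AAC. For completeness I will also unwind the theorem through Definition~\ref{def:dich}. Each valid spend path is either of \KS{} kind or of \AAC{} kind. A \KS{}-kind path gives the owner a unilateral weakening transition, which contradicts P3. This is exactly the argument of Theorem~\ref{thm:imposs}, which Remark~\ref{rem:no-dich} notes needs no dichotomy hypothesis. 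Hence every spend path of the asset type must be on the \AAC{} side. One point needs care here. \AAC{} is a property of the asset type, but dichotomy is stated per spend path. So I must argue that if even one spend path is of \KS{} kind, P2 and P3 already fail. That is immediate, because the weakening transition exists whenever the restriction is active.

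The second conjunct is existence: the envelope is such a witness in the extended \PSLM. For this I will invoke Theorem~\ref{thm:pslm-ach}, which the section roadmap announces. It states that the envelope realizes \NCEE{} in the extended \PSLM{} under the stated cryptographic assumptions and the registry-immutability assumption AS7. Two things must be checked for the envelope to count as a witness in the sense of the corollary.

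\begin{enumerate}
\item \textbf{The envelope's asset type has \AAC.} Every valid transition of an enveloped note either is a spend checked in-circuit against non-membership of its encumbrance marker in $\Mactive$, or is a settlement that consumes the marker into $\Mtomb$. The owner key cannot write to $\Mactive$ or $\Mtomb$, by AS7. So valid transitions are coupled to the live restriction state.
\item \textbf{The extended \PSLM{} fits the transfer-dichotomous setting.} Each spend path of an enveloped note is on the \AAC{} side of Definition~\ref{def:dich}. Unencumbered notes are not relevant to the encumbrance question.
\end{enumerate}

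I expect the main obstacle to be that this second conjunct is conditional. Theorem~\ref{thm:pslm-ach} is proved only under the cryptographic assumptions, with its obligations discharged in Section~\ref{sec:security}, and under AS7. The corollary must therefore be read as inheriting those hypotheses; I will state this explicitly rather than claim unconditional realizability. The delicate step within this is item 1 above, because \AAC{} is only informally defined. I would make it precise as follows: no transaction authorized solely by $\sk$ produces a valid state transition on an enveloped note while its marker is in $\Mactive$, except along circuit-defined paths. I would then derive this from knowledge soundness of $\Pi$ together with AS7, citing the encumbrance-integrity game of Section~\ref{sec:security} instead of reproving it here.
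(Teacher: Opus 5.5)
Your proposal is correct and follows the paper's own route: necessity comes from Theorem~\ref{thm:aac-nec} (which itself rests on Theorem~\ref{thm:imposs}), and the witness comes from Theorem~\ref{thm:pslm-ach}; your explicit check that the envelope has \AAC{} and your note that the positive half inherits AS7 and the cryptographic assumptions spell out what the paper leaves implicit. One small correction: in the Spend path the non-membership test against $\Mactive$ is done on-chain, and the circuit only binds the public encumbrance-nullifier input to the note's own $r$ through the equality constraint.
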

\begin{proof}
Theorem~\ref{thm:imposs} rules out \NCEE{} whenever \KS{} holds. Theorem~\ref{thm:aac-nec} shows that, in the transfer-dichotomous setting analysed here, any asset type supporting \NCEE{} must therefore have \AAC. Theorem~\ref{thm:pslm-ach} gives a witness on the positive side by exhibiting an \AAC{}-style construction in the extended \PSLM.
\end{proof}

\subsection{\PSLM{} Achievability}

\begin{theorem}[\PSLM{} Achievability]\label{thm:pslm-ach}
Under AS7, the envelope construction of \S\ref{sec:envelope} satisfies \NCEE{} for notes in the extended \PSLM.
\end{theorem}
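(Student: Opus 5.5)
The plan is to verify the four \NCEE{} properties of Definition~\ref{def:ncee} one at a time against the envelope construction of \S\ref{sec:envelope}. Throughout, AS7 is the hypothesis that no transaction authorized solely by $\sk$ can write to $\Mactive$ or $\Mtomb$. Wherever a property needs more than the ledger model, I will defer the cryptographic content to the games of Section~\ref{sec:security}: knowledge soundness of $\Pi$, collision resistance and domain separation of Poseidon2 under AS6, and Merkle binding. This keeps the achievability theorem at the model level, mirroring how Theorem~\ref{thm:imposs} was proved.

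\textbf{P1 (self-custody).} This is by inspection. The note $N=(v,r,\pk,\mathsf{aid})$ stays bound to the owner's $\pk$ inside $\cm(N)$, so ownership is unchanged. No verifier predicate in the envelope circuits requires a signature under any key other than the owner's. The only new secret appearing in enforcement is the note randomness $r$, used in $\nfe(N)$. I will argue that $r$ is shared at envelope creation as part of the committed redistribution intent, so it gives no third party spend authority. A witness of $r$ without $\sk$ cannot produce $\nfs(N)$, because of the separate domain tags \texttt{SPEND\_TAG} and \texttt{ENCUMBER\_TAG}.

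\textbf{P2 and P3 (transition restriction and irrevocability).} I treat these together, since both reduce to one coupling claim, which is exactly \AAC{} of Definition~\ref{def:aac}. The claim is that every accepted spend of an encumbered note must pass a circuit check that its marker is absent from $\Mactive$, or else consume it into $\Mtomb$ via a mechanism-defined path. I split the argument into two cases.

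First, consider an ordinary spend. Knowledge soundness extracts a witness $N$ with $\cm(N)\in C$. The circuit recomputes the marker from $N$ and proves non-membership in $\Mactive$. Hence, unless the prover breaks hash binding, an encumbered note cannot take the ordinary path.

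Second, consider the enforcement or settlement paths. These are the only circuits that accept a marker in $\Mactive$. They force the output commitments to match the committed condition tree and redistribution intent, so they realize only mechanism-defined paths. This gives P2.

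For P3, removing the encumbrance requires either deleting the marker from $\Mactive$ or inserting $\nfe(N)$ into $\Mtomb$ outside those circuits. AS7 forbids the owner from writing either set directly. The remaining route is to create a fresh, unencumbered note for the same value, which would require double-spending $\nfs(N)$ and is excluded by $\Sigma$.

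\textbf{P4 (permissionless enforcement).} When the public condition holds, anyone holding the public envelope data, including $r$ or the precomputed $\nfe(N)$, can generate the enforcement proof. This follows from completeness of $\Pi$ and the fact that $\nfe$ does not depend on $\sk$. Replay is blocked because $\nfe(N)$ is inserted into $\Mtomb$ upon enforcement.

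\textbf{Main obstacle.} I expect the hardest step to be the P2/P3 coupling. One must rule out every ledger transition, not just the intended circuits, as a way to move or re-encode the note's value while its marker sits in $\Mactive$. I would handle it by enumerating the transaction types $T$ of the extended \PSLM. For each type, I would show that its verifier either checks the marker sets or cannot consume $\nfs(N)$. The cryptographic failure events in this enumeration would be bounded by the encumbrance-integrity game of Appendix~\ref{app:games}.
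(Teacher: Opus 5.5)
Your decomposition matches the paper's: you check P1--P4 in turn, obtain P2 from the Spend-circuit binding plus the $\Mactive$ check, and obtain P3 from AS7. Several steps, however, rest on a misreading of the construction. The most serious concerns $r$ and enforcement. Nothing shares $r$. The intent $I=(\mathsf{action\_type},\mathsf{target\_addr},\mathsf{params\_hash},\mathsf{keeper\_fee},\mathsf{max\_amount})$ contains no randomness, and $r$ is a private witness in the Encumbrance, Spend and Settle circuits. What is public from registration onward is $\nfe(N)$ itself, as a public input of the Encumbrance circuit and a field of $E$. The claim you plan to argue is not just unnecessary but false for this construction. If the registered note's $r_b$ were published, an EIG adversary (who chose $v,\pk,\mathsf{aid}$) could recompute $\cm(N_b)$ from $(v,r_b,\pk_x,\mathsf{aid})$ and win outright, contradicting Proposition~\ref{prop:eig}, and likewise $G_{\mathsf{fwdback}}$. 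Nor is there an enforcement proof, so P4 does not follow from completeness of $\Pi$. By E4, $\texttt{enforce}(\mathsf{eid},\mathsf{oracleData})$ is callable by any address and needs neither $\sk$ nor a ZK proof. It checks $\nfe(N)\in\Mactive$, evaluates the condition tree committed in $\mathsf{condHash}$ on same-block oracle data, and succeeds when $\mathrm{cond}(S)=1$ (E3, Theorem~\ref{thm:enf-comp}). That is the whole P4 argument, and it uses no owner secret. Separately, what denies spend power to a holder of $r$ is that $\nfs$ is keyed by $\sk$ and the circuits enforce $\pk_x=\mathsf{derive\_pubkey\_x}(\mathsf{owner\_sk})$; the domain tags play no role there.

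In P3 your case split is not exhaustive. Use AS7 in its stated form: no party can modify $\Mactive$ or $\Mtomb$ except through the four registry functions \texttt{create}, \texttt{settle}, \texttt{enforce}, \texttt{expire}. Then go through all four. You omit \texttt{expire}, which is a permissionless exit once the committed (E1-immutable) deadline has passed. Your assertion that settlement and enforcement are ``the only circuits that accept a marker in $\Mactive$'' is wrong twice over: \texttt{enforce} and \texttt{expire} involve no circuit, and the Settle circuit checks neither the condition tree nor the intent. What makes \texttt{settle}, the only $\sk$-authorized exit, non-unilateral is the on-chain check $\mathsf{repayment\_amount}\geq\mathsf{debt\_accrued}(t_{\mathsf{now}})$ (\S\ref{sec:settle-circuit}), which your argument never invokes. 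Finally, the fresh-note route is blocked by the Spend-path marker check (E2), not by $\Sigma$. While $N$ is encumbered and unspent, $\nfs(N)\notin\Sigma$, so no double-spend is involved. As a minor point, non-membership in $\Mactive$ is an on-chain check on the circuit-bound public input, not something the Spend circuit proves.
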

\begin{proof}
\textbf{P1 (self-custody):} Ownership remains with the note holder, and no third-party co-signing is required. The owner proves spendability or settlement conditions through the construction's own circuits rather than by surrendering key control.

\textbf{P2 (transition restriction):} The Spend circuit and the on-chain checks jointly reject ordinary spends while the corresponding encumbrance marker remains in $\Mactive$. Valid note movement is therefore coupled to the mechanism's active restriction state rather than to the owner key alone.

\textbf{P3 (irrevocability):} Under AS7, the marker sets are writable only through the fixed registry state machine, whose permitted exits are \texttt{settle}, \texttt{enforce}, and \texttt{expire}. The owner cannot unilaterally clear $\Mactive$ outside those defined paths.

\textbf{P4 (permissionless enforcement):} \texttt{enforce} is callable by any party once the condition is met and does not require owner cooperation.

Hence all four \NCEE{} properties hold in the model, conditional on AS7 and the cryptographic assumptions stated in Section~\ref{sec:security}. Section~\ref{sec:envelope} supplies the Spend-circuit binding needed for P2, and \S\ref{sec:settle-circuit} together with Section~\ref{sec:security} supplies the settlement and state-machine conditions used in P3.
\end{proof}

\subsection{Execution Guarantees vs. Asset Guarantees}

Delegation frameworks such as ERC-4337, ERC-7710, and ERC-7715 can regulate \emph{who} submits a transaction or \emph{how} a wallet checks it. Within the asset setting studied here, that is not yet enough to alter the theorem: if the underlying asset path still preserves a \KS-style unilateral escape route, execution policy remains downstream of the structural obstacle. The envelope addresses the complementary layer. It is not a richer policy wrapper around an ordinary asset path; it is the witnessing construction on the \AAC{} side of the separation, where valid spending is tied to protocol-maintained marker sets in a commitment-based model.

\section{The Envelope Construction}\label{sec:envelope}

\subsection{Overview}

The envelope is the paper's witness on the positive side of the separation. It binds a note $N$ to a condition tree $T_c$ and a redistribution intent $I$. Once registered, the note's encumbrance marker enters $\Mactive$, and every valid transition must respect that state. The owner still holds the note key, but key ownership alone no longer yields an unconstrained spend.

\paragraph{Lifecycle.}\leavevmode\par\nobreak

\noindent\begin{minipage}{\linewidth}
\setlength{\fboxsep}{2pt}%
\begin{Verbatim}[frame=single, framesep=2pt, xleftmargin=0pt, xrightmargin=0pt]
[Reg]    create(E,Pi_enc)
            -> M_active += nf_encumber(N)
[A] Settle:  settle(eid, Pi_settle)
            -> M_tomb   += nf_encumber(N)
[B] Enforce: enforce(eid, oracleData) // cond(S)=1
            -> M_tomb   += nf_encumber(N)
[C] Expire:  block.timestamp > deadline, not cond(S):
            expire(eid) -> M_tomb += nf_encumber(N)
\end{Verbatim}
\end{minipage}

\begin{remark}[Double-Execution Guard]\label{rem:double-exec}
Paths A, B, and C must all begin by checking $\nfe(N) \in \Mactive$ and reverting otherwise. With that ordering, concurrent attempts cannot both succeed: the first successful path removes the marker from $\Mactive$, and every later path fails the same membership check.
\end{remark}

\subsection{Note Model}

\begin{definition}[Note]\label{def:note-detailed}
$N = (v \in \mathbb{F}_p, r \in \mathbb{F}_p, \pk \in G, \mathsf{aid} \in \mathbb{F}_p)$. Commitment $\cm(N) = \mathsf{Poseidon2}_4(v, r, \pk_x, \mathsf{aid})$, where $\pk_x$ is the $x$-coordinate of $\pk$ on \emph{Grumpkin} (embedded in BN254; short Weierstrass $y^2 = x^3 - 17$) or \emph{Jubjub} (embedded in BLS12-381; twisted Edwards $-u^2 + v^2 = 1 + d \cdot u^2 v^2$) depending on the target field.
\end{definition}

\subsection{Domain-Separated Nullifier Construction}\label{sec:nullsep}

\begin{definition}[Encumbrance Nullifier]\label{def:nfenc}
$$\nfe(N) = \mathsf{Poseidon2}_3(r, \cm(N), \texttt{ENCUMBER\_TAG}),$$
where $r$ is the note's blinding factor (not $\mathsf{owner\_sk}$).
\end{definition}

The spend nullifier $\nfs(N) = \mathsf{Poseidon2}_3(\sk, \cm(N), \texttt{SPEND\_TAG})$ is keyed by $\sk$, so compromise of $\sk$ naturally identifies the notes that key can spend. The encumbrance nullifier is designed to avoid inheriting that leakage: it is keyed by $r$, not by $\sk$, so key compromise does not by itself reveal which notes are or were encumbered. The domain tags from \S\ref{sec:prelim} keep the spend and encumbrance namespaces disjoint except with collision probability $\varepsilon_{\mathsf{cr}}$.

\paragraph{Relation to prior nullifier constructions.} Existing shielded-note systems derive nullifiers from key-controlled material. Zcash Sapling [HBHW24] computes $\mathsf{nf} = \mathsf{PRF}^{\mathsf{nfSapling}}_{\mathsf{nk}}(\rho)$, where $\mathsf{nk}$ is the holder's nullifier-deriving key (a public scalar derived from the spending key) and $\rho$ is note-specific randomness. Penumbra [Pen24] similarly derives nullifiers from key-derived material. The construction works for those systems because their threat model assumes the holder is the only party that needs to recognize a nullifier as the holder's own.

The contribution here is structural rather than a new hash recipe: \NCEE{}-P4 (permissionless enforcement) requires that a third party --- with no access to $\sk$ or $\mathsf{nk}$ --- can recognize and act on encumbrance state. Therefore the encumbrance namespace must be derivable \emph{without the owner key}. Deriving $\nfe$ from $r$ alone, with a domain tag separating it from the spend namespace, is the minimal change that satisfies this requirement while preserving the spend namespace's standard key-derived recipe. We are not aware of prior shielded-note systems that maintain two namespaces with different key dependencies for this reason.

\subsection{Forward and Backward Key-Compromise Security}\label{sec:fwdback}

\begin{definition}[Game $G_{\mathsf{fwdback}}$]\label{def:gfwd}
\textbf{Setup.} Challenger $\mathcal{C}$ generates $(\sk, \pk)$ and $(r_0, r_1) \xleftarrow{\$} \mathbb{F}_p \times \mathbb{F}_p$. Two notes $N_0 = (v, r_0, \pk, \mathsf{aid})$, $N_1 = (v, r_1, \pk, \mathsf{aid})$. $\mathcal{C}$ flips $b \leftarrow \{0,1\}$ and registers an envelope for $N_b$. The adversary $\mathcal{A}$ is given: $\sk, \pk, \cm(N_0), \cm(N_1), \nfe(N_b)$ (the on-chain observable), and all on-chain state. $\mathcal{A}$ outputs $b' \in \{0,1\}$ and wins if $b' = b$.
$$\Adv^{G_{\mathsf{fwdback}}}(\mathcal{A}, \lambda) = \left|\Pr[\mathcal{A} \text{ wins}] - \tfrac{1}{2}\right|.$$
\end{definition}

\begin{proposition}[Nullifier Key-Compromise Security --- Conditional on AS6]\label{prop:fwdback}
This result is conditional on AS6 (Poseidon2 indifferentiability, a heuristic assumption; see \S\ref{sec:assumptions}). Under AS6 (random oracle model for $\mathsf{Poseidon2}_3$), for any PPT adversary $\mathcal{A}$:
$$\Adv^{G_{\mathsf{fwdback}}}(\mathcal{A}, \lambda) \leq \varepsilon_{\mathsf{cr}} + Q_H/|\mathbb{F}_p|,$$
where $Q_H$ is the number of random oracle queries $\mathcal{A}$ makes and $\varepsilon_{\mathsf{cr}}$ is the collision-resistance advantage against $\mathsf{Poseidon2}_3$.
\end{proposition}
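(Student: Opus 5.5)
I would prove Proposition~\ref{prop:fwdback} by a short sequence of game hops in the random oracle model supplied by AS6. I would treat the width-3 hash $\mathsf{Poseidon2}_3$ and the width-4 commitment hash $\mathsf{Poseidon2}_4$ as independent random oracles. This is justified by AS6 together with the fixed domain tags and the differing input widths, which separate their input domains.

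The adversary's view in $G_{\mathsf{fwdback}}$ consists of $\sk$, $\pk$, $c_0=\cm(N_0)$, $c_1=\cm(N_1)$ and $y=\nfe(N_b)=H(r_b, c_b, \texttt{ENCUMBER\_TAG})$, plus on-chain state. The key observation is that $\sk$ enters only $\nfs$, never $\nfe$ or $\cm$. Handing $\sk$ to $\mathcal{A}$ therefore lets it compute spend nullifiers $H(\sk, c_i, \texttt{SPEND\_TAG})$ for both notes. Unless the spend and encumbrance namespaces collide, these values are independent of $y$, and this is where $\varepsilon_{\mathsf{cr}}$ enters. So the proof reduces to showing that $y$ is distributed independently of $b$ given the rest of the view.

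\textbf{Game $G_0$} is the real game. \textbf{Game $G_1$} aborts if any oracle input with the $\texttt{SPEND\_TAG}$ domain coincides in output with an $\texttt{ENCUMBER\_TAG}$ evaluation used to form $y$. The difference between $G_0$ and $G_1$ is bounded by $\varepsilon_{\mathsf{cr}}$, since such an event yields a Poseidon2 collision across the tagged domains. \textbf{Game $G_2$} replaces $y$ by a uniformly random field element.

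$G_1$ and $G_2$ are identical unless $\mathcal{A}$ queries the oracle at the exact point $(r_b, c_b, \texttt{ENCUMBER\_TAG})$. To bound this, I would argue that $c_0$ and $c_1$ reveal nothing about $r_0,r_1$ beyond what oracle queries expose. This is the hiding argument behind Lemma~\ref{lem:hiding}: in the ROM, $c_i$ is a uniform value independent of $r_i$ until $(v,r_i,\pk_x,\mathsf{aid})$ is queried. Each query therefore hits either of the two secret $r_i$'s with probability at most $2/|\mathbb{F}_p|$, which gives a total of $O(Q_H/|\mathbb{F}_p|)$. In $G_2$ the view is independent of $b$, so the advantage is $0$. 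Summing the hops gives the claimed bound, with the constant folded into $Q_H/|\mathbb{F}_p|$ (or the statement read with $2Q_H$).

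The step I expect to be the main obstacle is the second hop. The difficulty is that $\mathcal{A}$ can \emph{test} a candidate $r$ either through the commitment oracle or through the encumbrance oracle, and both guesses must be charged to the same $Q_H$ budget. I would handle this by defining a single bad event, ``some query contains $r_0$ or $r_1$ as its randomness input.'' Conditioned on this event not occurring, $c_0$, $c_1$ and $y$ are jointly uniform and independent of $b$. Because $r_0,r_1$ are uniform and independent of $\sk$, this conditioning argument goes through regardless of $\mathcal{A}$'s knowledge of the key, and the resulting bound on the bad event is the advertised $Q_H/|\mathbb{F}_p|$ term.
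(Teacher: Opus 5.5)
The paper gives no proof of Proposition~\ref{prop:fwdback}; Appendix~\ref{app:games} only restates the bound. The nearest argument is the last step of the proof of Proposition~\ref{prop:eig}: ``$\mathcal{A}$ can distinguish $b$ only by guessing $r_b$ or finding a collision.'' Your game hops formalize that same idea.

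The genuine gap is the phrase ``plus on-chain state.'' You place it in the view and never examine it, and your final claim that the $G_2$ view is independent of $b$ is false for $G_{\mathsf{fwdback}}$ as written. Registering an envelope for $N_b$ publishes the record $E$ of Definition~\ref{def:env}, whose field $\cm_{\mathsf{note}}$ equals $\cm(N_b)$. The same value is public input \texttt{pi[0]} of the Encumbrance circuit, and $\mathsf{position\_commit}=\mathsf{Poseidon2}_3(\mathsf{col\_nominal},\mathsf{debt\_principal},\cm(N_b))$ is posted as well. An adversary that compares $\cm_{\mathsf{note}}$ with $\cm(N_0)$ and $\cm(N_1)$ wins with probability $1$ without any oracle query, so no random-oracle hybrid can bound it. Even if $\cm_{\mathsf{note}}$ were withheld, counting the posted proof $\Pi_{\mathsf{enc}}$ as on-chain state would require a simulation step under AS5. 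That adds an $\varepsilon_{\mathsf{ZK}}$ term the stated bound lacks. Your argument therefore proves the bound only for the restricted view $(\sk,\pk,\cm(N_0),\cm(N_1),\nfe(N_b))$. This is presumably what the paper's parenthetical ``the on-chain observable'' intends. You should state that restriction explicitly, and say plainly that under the literal ``all on-chain state'' reading the proposition fails for the construction as specified.

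For the restricted view your argument is sound, and you can tighten it.
\begin{itemize}
\item \textbf{No factor $2$.} With your symmetric bad event (some query carries $r_0$ or $r_1$ in its randomness slot), the real game and the ideal game with fresh $y$ are identical until bad. Since $b$ is never used in the ideal game, $\Pr[\mathcal{A}\text{ wins}\wedge\neg\mathsf{bad}]=\tfrac12\Pr[\neg\mathsf{bad}]$ in both games. Hence $|\Pr[\mathcal{A}\text{ wins}]-\tfrac12|\le\tfrac12\Pr[\mathsf{bad}]\le Q_H/(|\mathbb{F}_p|-Q_H)$, which matches the stated term up to a second-order correction.
\item \textbf{The $\varepsilon_{\mathsf{cr}}$ hop is unnecessary.} The spend-nullifier points $(\sk,c_i,\texttt{SPEND\_TAG})$ differ from $(r_b,c_b,\texttt{ENCUMBER\_TAG})$ in the tag. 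In the ROM their outputs are independent of $y$ whether or not values coincide, so $\varepsilon_{\mathsf{cr}}$ is pure slack.
\item \textbf{Don't justify oracle independence by width.} The paper defines $\mathsf{Poseidon2}_k$ as the width-4 permutation with zero padding, and $\cm$ carries no tag. All you need is a single random oracle whose relevant evaluation points are pairwise distinct except with probability $O(1/|\mathbb{F}_p|)$.
\end{itemize}
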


\subsection{Condition Tree}\label{sec:condtree}

\begin{definition}[Condition Tree]\label{def:tc}
The condition tree is the public predicate against which permissionless enforcement is checked. It is a binary tree with leaf types:
\begin{itemize}
\item \textbf{PriceLeaf:} (\texttt{oracle\_addr}, \texttt{asset\_pair}, $\mathsf{op} \in \{\leq, \geq, <, >\}$, \texttt{threshold})
\item \textbf{TimeLeaf:} (\texttt{timestamp}, $\mathsf{op} \in \{\leq, \geq\}$)
\item \textbf{VolatilityLeaf:} (\texttt{oracle\_addr}, \texttt{asset\_pair}, \texttt{window}, \texttt{op}, \texttt{threshold})
\item \textbf{OnChainStateLeaf:} (\texttt{contract\_addr}, \texttt{calldata}, \texttt{op}, \texttt{threshold})
\end{itemize}
Internal nodes: AND, OR (binary), NOT (unary).
\end{definition}

\paragraph{Gas complexity.} Let $T_c$ have $k$ leaves. On-chain evaluation costs
$$G(T_c) = 3000k - 500 \leq 3000k \text{ gas (for } k \geq 1).$$
At the 30M block gas limit, the maximum tree size is $k_{\max} \approx 10{,}000$ leaves. Recommended operational constraint: $k \leq 20$ leaves for typical strategies; $k \leq 100$ for complex ones.

\subsection{Redistribution Intent and Position Commitment}

\begin{definition}[Redistribution Intent]\label{def:intent}
$$I = (\mathsf{action\_type}, \mathsf{target\_addr}, \mathsf{params\_hash}, \mathsf{keeper\_fee}, \mathsf{max\_amount}).$$
\end{definition}

Commitment:
$$\mathsf{intentHash} = \mathsf{Poseidon2}_5(\mathsf{action\_type}, \mathsf{target\_addr}, \mathsf{params\_hash}, \mathsf{keeper\_fee}, \mathsf{max\_amount}).$$

\begin{definition}[Position Commitment]\label{def:poscommit}
$$\mathsf{position\_commit} = \mathsf{Poseidon2}_3(\mathsf{col\_nominal}, \mathsf{debt\_principal}, \cm(N)).$$
\end{definition}

\paragraph{Debt semantics.} \texttt{debt\_principal} is the nominal principal committed at registration time. At enforcement time the registry computes accrued debt as:
$$\mathsf{debt\_accrued}(t_{\mathsf{enforce}}) = \mathsf{debt\_principal} \times (1 + \mathsf{rate})^{(t_{\mathsf{enforce}} - t_{\mathsf{register}})/\mathsf{seconds\_per\_year}}.$$
The health factor is then:
$$\mathsf{health\_factor} = \frac{\mathsf{col\_nominal} \times \mathsf{oracle\_price}(t_{\mathsf{enforce}}) \times \mathsf{LTV}}{\mathsf{debt\_accrued}(t_{\mathsf{enforce}})}.$$

\paragraph{Concrete IRM template.} As a concrete template satisfying AS8, the construction's reference IRM follows the Aave-V3 piecewise-linear strategy: $\mathsf{rate}(t)$ is determined by a contract whose state depends only on (i) cumulative borrow utilization committed at deployment-time parameters and (ii) block timestamp. Both are deterministic given block input, and rates are non-decreasing in time within an envelope's lifetime by parameter choice. Other IRM patterns (Compound's jump-rate model, Morpho's fixed-rate auction model with parameter commitment) likewise satisfy AS8 if their parameters are committed at registration. AS8 is not satisfied by IRMs that read mutable governance parameters at enforcement time; envelopes that depend on such IRMs require either an IRM-parameter snapshot at registration or a delegated freeze of governance during the envelope's lifetime.

\subsection{Envelope Definition}

\begin{definition}[Envelope]\label{def:env}
$$E = (\mathsf{eid}, \cm_{\mathsf{note}}, \nfe, \mathsf{condHash}, \mathsf{intentHash}, \mathsf{position\_commit}, \mathsf{deadline}, \mathsf{irm\_addr}, \mathsf{status}).$$
\end{definition}

\subsection{Encumbrance Circuit}

\paragraph{Public inputs (9 total: \texttt{pi[0]}--\texttt{pi[8]}).}
$\cm_{\mathsf{note}}$,
$\nfe$,
$\mathsf{condHash}$,
$\mathsf{intentHash}$,
$\mathsf{position\_commit}$,
$\mathsf{tree\_root}$,
$\mathsf{irm\_addr\_commit}$ (\texttt{pi[6]}),
$\mathsf{irm\_addr}$ (\texttt{pi[7]}),
$\mathsf{target\_addr}$ (\texttt{pi[8]}).

\paragraph{Private inputs (witness).}
$v$,
$r$,
$\mathsf{aid}$,
$\pk_x$,
$\pk_y$,
$\mathsf{owner\_sk}$,
$\mathsf{merkle\_path}$,
$\mathsf{col\_nominal}$,
$\mathsf{debt\_principal}$.

\paragraph{Circuit constraints.}
\begin{enumerate}
\item $\cm(N) = \mathsf{Poseidon2}_4(v, r, \pk_x, \mathsf{aid})$
\item $\pk_x = \mathsf{derive\_pubkey\_x}(\mathsf{owner\_sk})$
\item $\nfe = \mathsf{Poseidon2}_3(r, \cm(N), \texttt{ENCUMBER\_TAG})$ \quad (from $r$, not $\sk$)
\item $\mathsf{VerifyMerklePath}(\mathsf{tree\_root}, \cm_{\mathsf{note}}, \mathsf{merkle\_path}) = 1$
\item $\nfs(N) \notin \Sigma$ \quad (on-chain only; not a circuit constraint)
\item $\mathsf{intentHash} = \mathsf{Poseidon2}_5(\mathsf{action\_type}, \mathsf{target\_addr}, \mathsf{params\_hash}, \mathsf{keeper\_fee}, \mathsf{max\_amount})$
\item $\mathsf{position\_commit} = \mathsf{Poseidon2}_3(\mathsf{col\_nominal}, \mathsf{debt\_principal}, \cm_{\mathsf{note}})$
\item $\mathsf{col\_nominal} = v$ (or $v_{\mathsf{encumber}} \leq v$ with range proof)
\item $v > 0$; $\mathsf{max\_amount} \leq v$; $\mathsf{keeper\_fee} < \mathsf{max\_amount}$
\item $\mathsf{irm\_addr\_commit} = \mathsf{Poseidon2}_2(\mathsf{irm\_addr}, \texttt{PARAMS\_TAG})$; $\mathsf{irm\_addr} = \texttt{pi}[7]$
\item $\mathsf{owner\_sk} \neq 0$ (no degenerate Grumpkin identity)
\item $r \neq 0$ (hiding; prevents deterministic commitments)
\item $\mathsf{target\_addr} = \texttt{pi}[8]$ (circuit-bound enforcement target)
\end{enumerate}

\subsection{Spend Circuit}

The Spend circuit includes an in-circuit nullifier binding constraint that prevents two substitution attacks against the on-chain $\Mactive$ check.

\begin{minipage}{\dimexpr\linewidth-20pt\relax}
\setlength{\fboxsep}{2pt}%
\begin{Verbatim}[frame=single, fontsize=\small, framesep=2pt, xleftmargin=0pt, xrightmargin=0pt]
// Standard PSLM constraints
cm(N)    = Poseidon2_4(v, r, pk_x, aid)         // [private]
pk_x     = derive_pubkey_x(owner_sk)            // [private]
nf_spend = Poseidon2_3(owner_sk,cm(N),SPEND_TAG)// [public]
VerifyMerklePath(tree_root, cm(N), merkle_path) // [public tree_root]

// Envelope addition: nullifier binding
nf_enc_r = Poseidon2_3(r, cm(N), ENCUMBER_TAG)  // [private]
nf_enc_r == nf_encumber_public_input            // [equality constraint]
// On-chain: require nf_encumber_public_input NOT in M_active
\end{Verbatim}
\end{minipage}

The equality constraint $\mathsf{nf\_enc\_recomputed} = \mathsf{nf\_encumber\_public\_input}$ closes two attack classes:

\paragraph{Phantom nullifier injection (Attack A):} Prover supplies $\mathsf{nf\_encumber\_public\_input} = x \notin \Mactive$ while actual $\nfe(N) \in \Mactive$. Closed by the Merkle path binding and AS2 ($\mathsf{Poseidon2}$ preimage infeasibility).

\paragraph{Cross-note nullifier swap (Attack B):} Prover supplies $\nfe(N')$ for a different note $N' \notin \Mactive$. Closed by the equality constraint requiring a $\mathsf{Poseidon2}$ collision (AS2).

\subsection{Settle Circuit}\label{sec:settle-circuit}

\paragraph{Public inputs.} $\mathsf{eid}, \mathsf{nf\_encumber\_public\_input}, \mathsf{tree\_root}, \mathsf{repayment\_amount}$.

\paragraph{Circuit constraints.} (Each line below is rendered in pseudocode for readability; the underlying ACIR system encodes equality and range constraints natively.)

\begin{minipage}{\dimexpr\linewidth-20pt\relax}
\setlength{\fboxsep}{2pt}%
\begin{Verbatim}[frame=single, fontsize=\small, framesep=2pt, xleftmargin=0pt, xrightmargin=0pt]
cm(N)    = Poseidon2_4(v, r, pk_x, aid)        // equality
pk_x     = derive_pubkey_x(owner_sk)           // ownership equality
VerifyMerklePath(tree_root, cm(N), merkle_path)// Merkle inclusion
nf_enc_r = Poseidon2_3(r, cm(N), ENCUMBER_TAG) // equality
nf_enc_r == nf_encumber_public_input           // equality constraint
repayment_amount < 2^128                       // range constraint
\end{Verbatim}
\end{minipage}

On-chain verification by \texttt{EnvelopeRegistry.settle} follows the Checks-Effects-Interactions (CEI) pattern:
\begin{enumerate}
\item \emph{Check:} Verify the SNARK proof against public inputs.
\item \emph{Check:} Confirm $\mathsf{nf\_encumber\_public\_input} \in \Mactive$.
\item \emph{Check:} Confirm $\mathsf{repayment\_amount} \geq \mathsf{debt\_accrued}(t_{\mathsf{now}})$.
\item \emph{Effect:} Add $\mathsf{nf\_encumber\_public\_input}$ to $\Mtomb$; remove from $\Mactive$.
\item \emph{Interact:} Release the redistribution output to the lender.
\end{enumerate}
External interaction (Step 5) occurs only after the marker-set state has been updated, so a re-entrant call into \texttt{settle} during the redistribution callback observes $\nfe \notin \Mactive$ and reverts at Step 2. The same CEI ordering applies to \texttt{enforce}: condition evaluation and marker-set update precede the keeper-reward and redistribution interactions.

\paragraph{Completion of Theorem~\ref{thm:pslm-ach} P3.} The Settle circuit is the only $\sk$-authorized exit path from $\Mactive$, and it requires $\mathsf{repayment\_amount} \geq \mathsf{debt\_accrued}$. The owner can only settle by satisfying the mechanism's terms.

\subsection{Formal Properties E1--E6}

\begin{description}
\item[\textbf{E1 (Binding):}] $\mathsf{condHash}, \mathsf{intentHash}, \mathsf{position\_commit}, \mathsf{deadline}$ are immutable after registration.
\item[\textbf{E2 (Encumbrance Integrity):}] If $\nfe(N) \in \Mactive$, no valid Spend proof can pass the on-chain verifier for $N$ as the spent note.
\item[\textbf{E3 (Enforceability):}] If $\mathrm{cond}(S) = 1$, $\texttt{enforce}(\mathsf{eid}, \mathsf{oracleData})$ succeeds by correctness of the on-chain evaluator.
\item[\textbf{E4 (Permissionlessness):}] $\texttt{enforce}$ is callable by any address; requires no $\sk$ authorization; requires no ZK proof.
\item[\textbf{E5 (Liveness Independence):}] Follows from E4.
\item[\textbf{E6 (Bounded Settlement):}] Redistributed amount $\leq \mathsf{max\_amount}$ committed in $\mathsf{intentHash}$.
\end{description}

\paragraph{Sequential re-encumbrance.} Lemma~\ref{lem:double-enc} below shows that two simultaneously active envelopes for the same note are impossible. Sequential re-encumbrance --- registering a fresh envelope for $N$ \emph{after} a prior envelope has terminated via \texttt{settle}, \texttt{enforce}, or \texttt{expire} --- is permitted by construction and is intentional: an asset is reusable as collateral once a prior obligation closes.

\subsection{Expire Function}

The \texttt{expire} path becomes available once \texttt{deadline} has passed and the envelope has not been settled or enforced. It is permissionless (no $\sk$ authorization, no ZK proof) and follows the CEI pattern:
\begin{enumerate}
\item \emph{Check:} $\mathsf{block.timestamp} > \mathsf{deadline}$ and $\nfe \in \Mactive$.
\item \emph{Effect:} Move $\nfe$ from $\Mactive$ to $\Mtomb$.
\item \emph{Interact:} If a fallback action address is registered, call
\[
  \texttt{IEnforcementAction}(\mathsf{fallbackAddr}).\texttt{execute}(\mathsf{eid}, \mathsf{msg.sender}, \mathsf{params}).
\]
\end{enumerate}

\section{Security Analysis}\label{sec:security}

\subsection{Assumptions}\label{sec:assumptions}

\begin{description}
\item[AS1 (SNARK knowledge soundness):] $\Pi$ is $(\varepsilon_{\mathsf{kS}}, t_{\mathsf{kS}})$-knowledge sound.
\item[AS2 (Collision resistance):] For each width $k \in \{2,3,4,5\}$, $\mathsf{Poseidon2}_k$ is $(\varepsilon^{(k)}_{\mathsf{cr}}, t_{\mathsf{cr}})$-collision resistant. Write $\varepsilon_{\mathsf{cr}} := \max_k \varepsilon^{(k)}_{\mathsf{cr}}$.
\item[AS3 (Discrete log):] DL in $G$ is $(\varepsilon_{\mathsf{dl}}, t_{\mathsf{dl}})$-hard. \emph{(Not post-quantum secure; see below.)}
\item[AS4 (Oracle integrity):] Oracle values read in a single transaction are consistent (same block).
\item[AS5 (UltraHonk ZK --- implementation assumption):] UltraHonk achieves $(\varepsilon_{\mathsf{ZK}}, t_{\mathsf{ZK}})$-computational zero-knowledge, where $\varepsilon_{\mathsf{ZK}} \approx 2^{-100}$ for BN254 ($\lambda \approx 100$) and $\varepsilon_{\mathsf{ZK}} \approx 2^{-128}$ for BLS12-381 ($\lambda = 128$).
\item[AS6 (Poseidon2 Indifferentiability):] For each width $k$ used, $\mathsf{Poseidon2}_k$ is $(t_{\mathsf{ind},k}, \varepsilon_{\mathsf{ind},k})$-indifferentiable from a random oracle [MRH04]. Treated as a heuristic assumption as of March 2026.
\item[AS7 (EnvelopeRegistry Immutability --- deployment assumption):] The deployed EnvelopeRegistry contract has no admin key, upgrade proxy, or any mechanism by which any party can modify $\Mactive$ or $\Mtomb$ outside the four protocol-defined functions.
\item[AS8 (IRM Correctness):] The Interest Rate Model contract returns rates that are non-decreasing over time and deterministic given block timestamp and committed parameters.
\end{description}

\paragraph{Post-quantum scope.} AS3 is not post-quantum secure: Shor's algorithm solves DL on elliptic curves in polynomial quantum time [Sho94]. All results citing AS3 (specifically Theorem~\ref{thm:settle-sec} and $G_{\mathsf{agent\_key}}$) do not hold against a quantum adversary.

\begin{table}[!htbp]
\centering
\footnotesize
\renewcommand{\arraystretch}{1.2}
\setlength{\tabcolsep}{4pt}
\begin{tabularx}{\textwidth}{@{}p{0.30\textwidth} l p{0.16\textwidth} L@{}}
\toprule
\textbf{Result} & \textbf{Game} & \textbf{Assump.} & \textbf{Scope / advantage bound} \\
\midrule
Thm.~\ref{thm:enc-sec} Encumbrance Security & $G_{\mathsf{enc}}$ & AS1, AS2 & Spend circuit \\
Lem.~\ref{lem:double-enc} Double-Encumbrance Prev. & --- & AS1, AS2 & Single note \\
Thm.~\ref{thm:enf-safe} Enforcement Safety & $G_{\mathsf{enf}}$ & AS4 & Single transaction \\
Thm.~\ref{thm:enf-comp} Enf. Completeness & --- & AS4, liveness & --- \\
Prop.~\ref{prop:eig} Encumbrance Indistinguishability & EIG & AS2, AS5, AS6 & $\varepsilon_{\mathsf{ZK}}+2\varepsilon_{\mathsf{ind}}+\varepsilon_{\mathsf{cr}}+Q_H/|\mathbb{F}_p|$; conditional on AS5, AS6 \\
Thm.~\ref{thm:settle-sec} Settle Circuit Security & $G_{\mathsf{settle}}$ & AS1, AS2, AS3 & Owner-exclusive settle \\
\bottomrule
\end{tabularx}
\caption{Security claims, games, and dependencies (main results).}\label{tab:sec}
\end{table}

\subsection{Encumbrance Security}

\begin{theorem}[Encumbrance Security]\label{thm:enc-sec}
Under AS1 and AS2:
$$\Adv^{G_{\mathsf{encumber}}}(\mathcal{A}, \lambda) \leq \min(1, n \cdot (\varepsilon_{\mathsf{kS}} + 2\varepsilon_{\mathsf{cr}})),$$
where $n$ is the number of notes enrolled in active envelopes during the game.
\end{theorem}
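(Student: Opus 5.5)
The plan is a reduction-based argument. We fix the natural reading of $G_{\mathsf{encumber}}$: the adversary may corrupt owner keys and register envelopes for $n$ notes, and it wins if some valid Spend proof $\pi^*$ is accepted on-chain for a note $N^*$ with $\nfe(N^*) \in \Mactive$. We first isolate one targeted note with a hybrid or guessing argument. Guess an index $i \in [n]$, or equivalently take a union bound over the $n$ enrolled notes. Then bound the probability that the adversary succeeds against note $N_i$ by $\varepsilon_{\mathsf{kS}} + 2\varepsilon_{\mathsf{cr}}$. Summing over $i$ and capping at $1$ gives the stated $\min(1, n(\varepsilon_{\mathsf{kS}} + 2\varepsilon_{\mathsf{cr}}))$.

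For a fixed $N_i$, apply AS1 to the accepted Spend proof. Except with probability $\varepsilon_{\mathsf{kS}}$, the extractor outputs a witness $(v, r, \pk_x, \mathsf{aid}, \mathsf{owner\_sk}, \mathsf{merkle\_path})$ satisfying every constraint of the Spend circuit. Two of these constraints matter: the Merkle path places $\cm' = \mathsf{Poseidon2}_4(v, r, \pk_x, \mathsf{aid})$ under $\mathsf{tree\_root}$, and the equality constraint fixes the public input $\mathsf{nf\_encumber\_public\_input}$ to $\mathsf{Poseidon2}_3(r, \cm', \texttt{ENCUMBER\_TAG})$. The on-chain check accepts only if this public input lies outside $\Mactive$. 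Success against $N_i$ therefore means the spent leaf is $\cm(N_i)$ while the recomputed encumbrance nullifier differs from the registered $\nfe(N_i)$.

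From such a situation we derive two collisions, which accounts for the factor $2$ in front of $\varepsilon_{\mathsf{cr}}$. First, suppose the extracted opening $(v, r, \pk_x, \mathsf{aid})$ differs from the registered opening of $N_i$, which the reduction either knows as challenger or extracts from the Encumbrance proof at registration, yet hashes to the same $\cm(N_i)$. That pair is a $\mathsf{Poseidon2}_4$ collision; this is Attack B in the paper's terms. Otherwise the opening equals that of $N_i$, so the recomputed nullifier equals $\nfe(N_i) \in \Mactive$ and the verifier rejects, a contradiction.

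The second collision covers Attack A. Suppose the Merkle path authenticates a leaf that is not $\cm(N_i)$ at the position occupied by $N_i$. Walking up the path to the shared root then yields a $\mathsf{Poseidon2}_2$ internal-node collision, bounded by $\varepsilon_{\mathsf{cr}}$ by taking the maximum over widths as in AS2. Domain separation between \texttt{SPEND\_TAG} and \texttt{ENCUMBER\_TAG} rules out identifying a spend nullifier with an encumbrance marker, and any such identification falls into the same collision events.

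The main obstacle is making the reduction well-defined when the registered opening of $N_i$ is adversarially chosen, as happens when the adversary registers envelopes itself. In that case the reduction must also run the knowledge extractor on the Encumbrance proof, which seemingly costs a second $\varepsilon_{\mathsf{kS}}$ term. I would first try to absorb this by letting the challenger generate the enrolled notes, which is consistent with the stated bound. If the adversary does register envelopes, the fallback is to treat the registration-time extraction as part of the same knowledge-soundness event, using a simulation-extractable multi-proof formulation.
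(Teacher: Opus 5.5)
Your skeleton matches the paper's proof. It uses a union bound over the $n$ enrolled notes and one application of AS1 to the accepted Spend proof. Collision resistance of $\mathsf{Poseidon2}_4$ then forces the extracted opening (hence $r'$) to be the note's own. Finally, the equality constraint gives $\mathsf{nf\_encumber\_public\_input} = \nfe(N) \in \Mactive$, contradicting the on-chain non-membership check.

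The gap is in the step that ties the accepted proof to the target note. You take success against $N_i$ to mean that the spent leaf is $\cm(N_i)$, or the leaf at $N_i$'s position. But the leaf and the path are private witness data. An accepted proof does not determine a unique witness, so this is not a well-defined win event. If you phrase it existentially, the witness returned by the AS1 extractor need not be the one the event refers to, and then your reduction gets no collision.

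The public value that ties a Spend to a particular note is the spend nullifier. Its insertion into $\Sigma$ is what it means in the \PSLM{} for $N$ to be spent, and the paper's proof uses exactly this. A winning attempt reveals $\nfs(N)$, so the extracted $(\sk', \cm')$ satisfies $\mathsf{Poseidon2}_3(\sk', \cm', \texttt{SPEND\_TAG}) = \mathsf{Poseidon2}_3(\sk_N, \cm(N), \texttt{SPEND\_TAG})$. It must therefore equal $(\sk_N, \cm(N))$ unless this is a \texttt{SPEND\_TAG} collision.

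Your proposal never uses this constraint, and the Merkle step cannot stand in for it. $\mathsf{VerifyMerklePath}$ only certifies that $\cm'$ is \emph{some} leaf under $\mathsf{tree\_root}$. Nothing in the circuit or the on-chain checks pins the path to $N_i$'s position. A valid path to a different, unencumbered leaf involves no collision at all, so your $\mathsf{Poseidon2}_2$ charge is spent on the wrong event.

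The repair is local: replace the Merkle case by the \texttt{SPEND\_TAG} case and keep your $\mathsf{Poseidon2}_4$ case. This gives $\varepsilon_{\mathsf{kS}} + 2\varepsilon_{\mathsf{cr}}$ per note, and the stated bound after the union bound and clipping. (The paper attributes its two $\varepsilon_{\mathsf{cr}}$ terms to \texttt{SPEND\_TAG} and \texttt{ENCUMBER\_TAG} collisions, and uses the $\mathsf{Poseidon2}_4$ step of its case (a) without charging it separately. Charging \texttt{SPEND\_TAG} plus $\mathsf{Poseidon2}_4$ is the cleaner bookkeeping.)

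With the link running through $\nfs(N)$, your ``main obstacle'' also disappears. The win condition is $\nfe(N) \in \Mactive$ for the note $N$ named in the winning \texttt{SpendAttempt}$(N,\pi)$. So the extracted opening is compared with $N$ itself, not with an opening recovered from the registration proof, and no second $\varepsilon_{\mathsf{kS}}$ arises. Neither fallback is needed, and both cost something:
\begin{itemize}
\item letting the challenger generate the enrolled notes weakens the game, since \texttt{Create} is an adversary oracle in $G_{\mathsf{encumber}}$;
\item simulation-extractability is a stronger assumption than AS1.
\end{itemize}
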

\begin{proof}
Suppose $\mathcal{A}$ wins $G_{\mathsf{encumber}}$ by producing an accepting Spend proof $\Pi^*$ such that the on-chain verifier accepts, the revealed spend nullifier is $\nfs(N)$, and $\nfe(N) \in \Mactive$.

By AS1, extractor $E_{\mathsf{kS}}(\Pi^*, pp)$ outputs a witness $w = (v, r', \pk_x, \mathsf{aid}, \sk', \mathsf{merkle\_path}, \ldots)$ satisfying all Spend circuit constraints, except with probability $\varepsilon_{\mathsf{kS}}$.

\textbf{Case (a):} $\sk' = \sk_N$ and the extracted $\cm(N) = \cm(N_{\mathsf{spent}})$. By AS2 (collision resistance of $\mathsf{Poseidon2}_4$), the witness equals the true note, hence $r' = r_N$. The equality constraint gives $x = \mathsf{Poseidon2}_3(r_N, \cm(N), \texttt{ENCUMBER\_TAG}) = \nfe(N) \in \Mactive$, contradicting the on-chain check $x \notin \Mactive$.

\textbf{Case (b):} A $\mathsf{Poseidon2}$ collision exists (under \texttt{SPEND\_TAG} or \texttt{ENCUMBER\_TAG}). Each contributes $\varepsilon_{\mathsf{cr}}$; by union bound, $2\varepsilon_{\mathsf{cr}}$ per note.

By union bound over $n$ notes and clipping at 1: $\Adv^{G_{\mathsf{encumber}}}(\mathcal{A}) \leq \min(1, n \cdot (\varepsilon_{\mathsf{kS}} + 2\varepsilon_{\mathsf{cr}}))$.
\end{proof}

\begin{lemma}[Double-Encumbrance Prevention]\label{lem:double-enc}
The same note $N$ cannot be registered in two simultaneously active envelopes.
\end{lemma}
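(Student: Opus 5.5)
The plan is to reduce double registration to a membership collision in $\Mactive$, since $\Mactive$ is a set keyed by encumbrance nullifiers. First, fix the registration semantics. \texttt{create}$(E,\Pi_{\mathsf{enc}})$ verifies the Encumbrance-circuit proof against its public inputs, which include $\cm_{\mathsf{note}}$ and $\nfe$. It then checks on-chain that the supplied $\nfe \notin \Mactive$, reverting otherwise, and only then inserts $\nfe$ into $\Mactive$. This mirrors the guard of Remark~\ref{rem:double-exec}. Under AS7 no other function writes to $\Mactive$. So at any moment the active envelopes correspond injectively to distinct elements of $\Mactive$.

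Now suppose, toward contradiction, that two envelopes $E_1, E_2$ are simultaneously active and both encumber the same note $N$. By AS1, apply the knowledge extractor to each accepting registration proof $\Pi_{\mathsf{enc}}^{(i)}$. Except with probability $\varepsilon_{\mathsf{kS}}$ each, this yields witnesses $(v_i, r_i, \pk_{x,i}, \mathsf{aid}_i, \ldots)$ satisfying constraints 1 and 3. These give $\cm_{\mathsf{note}}^{(i)} = \mathsf{Poseidon2}_4(v_i, r_i, \pk_{x,i}, \mathsf{aid}_i)$ and $\nfe^{(i)} = \mathsf{Poseidon2}_3(r_i, \cm_{\mathsf{note}}^{(i)}, \texttt{ENCUMBER\_TAG})$.

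"Same note" means $\cm_{\mathsf{note}}^{(1)} = \cm_{\mathsf{note}}^{(2)} = \cm(N)$. By AS2 for $\mathsf{Poseidon2}_4$, the openings coincide except with probability $\varepsilon_{\mathsf{cr}}$, so $r_1 = r_2 = r_N$. Hence $\nfe^{(1)} = \nfe^{(2)} = \nfe(N)$, a value determined by the note alone. Order the two registrations on-chain. When the second one executes, $\nfe(N)$ is already in $\Mactive$: the first envelope is still active, and by AS7 only \texttt{settle}, \texttt{enforce} or \texttt{expire} can remove it. The second \texttt{create} therefore reverts, which is the contradiction. The failure probability is at most $2\varepsilon_{\mathsf{kS}} + \varepsilon_{\mathsf{cr}}$, consistent with the AS1, AS2 dependency in Table~\ref{tab:sec}. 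Sequential re-encumbrance is unaffected, because termination removes the marker from $\Mactive$.

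The main obstacle is the step asserting that $\nfe$ is a function of the note rather than of prover-chosen data. An adversary might try to register $N$ a second time under a different nullifier value. The argument must show that the pair of constraints (commitment opening, nullifier derivation from the same $r$) pins the nullifier down. This rests entirely on binding of $\cm$ (collision resistance of $\mathsf{Poseidon2}_4$) and on soundness of extraction.

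A subtlety is that "same note" must be read as "same commitment". If one instead reads it as the same $(v,\pk,\mathsf{aid})$ with fresh $r$, that is a different note with a different commitment, and the lemma does not forbid it. I would state this interpretation explicitly. I would also make sure the on-chain non-membership check in \texttt{create} is listed as part of the registry state machine covered by AS7.
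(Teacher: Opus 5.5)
Your proposal is correct and follows the paper's own argument: the first \texttt{create} places $\nfe(N)$ in $\Mactive$, and the on-chain non-membership check in \texttt{create} rejects any second registration while that marker remains active. Your extra step---extracting both witnesses via AS1 and using collision resistance of $\mathsf{Poseidon2}_4$ to force the second registration's nullifier to equal $\nfe(N)$---makes explicit what the paper's two-line proof leaves implicit, and it is exactly what justifies the AS1, AS2 dependency the paper lists for this lemma in Table~\ref{tab:sec}.
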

\begin{proof}
First registration puts $\nfe(N) \in \Mactive$. A second registration requires $\nfe \notin \Mactive$ (on-chain check), so it is rejected.
\end{proof}

\subsection{Enforcement Safety and Completeness}

\begin{theorem}[Enforcement Safety]\label{thm:enf-safe}
Under AS4, $\Adv^{G_{\mathsf{enforce\_safe}}}(\mathcal{A}) = 0$.
\end{theorem}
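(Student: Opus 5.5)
The claim $\Adv^{G_{\mathsf{enforce\_safe}}}(\mathcal{A}) = 0$ is exact rather than negligible, so the proof should be deterministic. I will not use a reduction; I will do a case analysis on the code path of \texttt{enforce}. I take $G_{\mathsf{enforce\_safe}}$ (Appendix~\ref{app:games}) to be the game in which $\mathcal{A}$ wins if some call to \texttt{enforce} succeeds while at least one of the following holds: (i) $\mathrm{cond}(S) = 0$ on the state actually read in that transaction; (ii) the envelope's marker $\nfe(N)$ was not in $\Mactive$ at call time; or (iii) the redistribution departs from the committed intent, for example the amount exceeds $\mathsf{max\_amount}$ or goes to a target other than $\mathsf{target\_addr}$. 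If the appendix game is phrased differently, I would first restate it in this form. The proof shows that each winning event is unreachable on any successful execution.

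\textbf{Step 1: marker gate (condition (ii)).} By Remark~\ref{rem:double-exec} and the CEI ordering, \texttt{enforce} first checks $\nfe \in \Mactive$ and reverts otherwise. It then moves the marker to $\Mtomb$ before any external interaction. So a successful call implies the envelope was active, and any later \texttt{settle}, \texttt{enforce}, or \texttt{expire} call, including a re-entrant one, fails that same check. This also rules out double execution.

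\textbf{Step 2: binding of the committed data (condition (iii)).} The registry stores $\mathsf{condHash}$ and $\mathsf{intentHash}$ immutably (E1). \texttt{enforce} recomputes these hashes from the tree and intent supplied in calldata and reverts on mismatch. The payout is bounded by $\mathsf{max\_amount}$ (E6), and the target is the circuit-bound $\mathsf{target\_addr}$. A mismatching preimage would need a hash collision. That would contribute an $\varepsilon_{\mathsf{cr}}$ term, which is incompatible with an advantage of exactly $0$. I therefore need the on-chain check to be equality against stored data, not hash-only. The workaround I would try first is to assume the registry stores the condition tree and intent in plaintext, or equivalently to treat the committed tree and intent as fixed in the game. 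This keeps the bound at exactly $0$.

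\textbf{Step 3: condition soundness (condition (i)), the main obstacle.} The evaluator is a deterministic circuit over the leaves of $T_c$, combined with AND, OR and NOT. Within a single transaction, every leaf reads oracle or chain state. By AS4, all oracle reads in that transaction are consistent, so the evaluated predicate is $\mathrm{cond}(S)$ for one well-defined state $S$. I would prove by structural induction on $T_c$ that the evaluator returns $1$ exactly when $\mathrm{cond}(S) = 1$. The hard part is arguing that no interleaving inside the transaction can change $S$ between leaf reads. In particular, an \texttt{OnChainStateLeaf} call or the later redistribution callback must not affect a read already performed, and CEI orders all external interactions after evaluation. AS4 covers oracle leaves, and the no-callback-before-evaluation property covers on-chain leaves. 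With each winning event unreachable, the advantage is $0$.
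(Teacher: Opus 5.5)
Your Step~3 is essentially the paper's whole proof. The paper argues only that the evaluator reads oracle values within one transaction (AS4) and evaluates the revealed tree, so $\mathrm{cond}(S)=0$ makes it return false and \texttt{enforce} revert. Your structural induction just spells out that evaluator correctness.

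The problems come from what you added. $G_{\mathsf{enforce\_safe}}$ is never defined in Appendix~\ref{app:games}, and the paper's theorem concerns only your win condition~(i). Condition~(iii) cannot be given advantage exactly $0$ in the actual construction. $\mathsf{max\_amount}$ exists only inside $\mathsf{intentHash}$, and the condition tree is deliberately hidden until enforcement (Appendix~\ref{app:privacy}). Binding is therefore computational, as you noticed. Storing tree and intent in plaintext contradicts that stated design, so it proves a statement about a different scheme. The clean fix is to drop (ii) and (iii): they are the content of Remark~\ref{rem:double-exec} and of E6/$G_{\mathsf{agent\_key}}$, not of this theorem. For (i), adopt explicitly the convention the paper uses silently: $\mathrm{cond}$ is the predicate of the tree actually revealed and accepted against $\mathsf{condHash}$ in that transaction. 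Your hash observation is still worth keeping as a remark. Under the other reading (predicate of the originally committed preimage), the honest bound even for (i) is $\varepsilon_{\mathsf{cr}}$, not $0$.

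Your appeal to CEI in Steps~1 and~3 also misses the case you flag as hard. An \texttt{OnChainStateLeaf} is evaluated by an external call to \texttt{contract\_addr} made \emph{during} condition evaluation, i.e.\ before $\nfe(N)$ leaves $\Mactive$. CEI only places the keeper-reward and redistribution calls after the effect. If that leaf call can write state, two things go wrong:
\begin{itemize}
\item It can change what \emph{later} leaves read, so no single $S$ is defined. The danger is to subsequent reads, not to reads already performed.
\item It can re-enter \texttt{enforce} or \texttt{settle} while the marker is still active. Then the entry membership check alone no longer excludes double execution, contrary to your Step~1 claim.
\end{itemize}
Closing this requires the leaf reads to be non-mutating (e.g.\ \texttt{STATICCALL}). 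That is an implementation assumption stated neither in AS4 nor anywhere else in the paper. Alternatively, phrase (i) as ``the revealed tree evaluates to $0$ on the values actually returned during the transaction.'' Evaluator correctness alone then gives advantage $0$, and AS4 is needed only to identify those values with a single oracle state.
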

\begin{proof}
The on-chain condition evaluator reads oracle values in the same transaction and evaluates the revealed tree. If $\mathrm{cond}(S) = 0$, the evaluator returns false and the transaction reverts.
\end{proof}

\begin{theorem}[Enforcement Completeness]\label{thm:enf-comp}
Given liveness and AS4, if $\mathrm{cond}(S) = 1$ then $\texttt{enforce}(\mathsf{eid}, \mathsf{oracleData})$ succeeds for some oracle data revealed at block with state $S$.
\end{theorem}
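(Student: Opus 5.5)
The natural approach is to trace \texttt{enforce} step by step and show that, when $\mathrm{cond}(S)=1$, every check it performs passes for at least one oracle-data argument a keeper can supply. By the CEI description and Remark~\ref{rem:double-exec}, \texttt{enforce} does four things:
\begin{enumerate}
\item it checks $\nfe(N) \in \Mactive$;
\item it checks that the revealed condition tree hashes to the committed $\mathsf{condHash}$;
\item it evaluates the tree against oracle values read in the same transaction;
\item it updates the marker sets and then performs the keeper-reward and redistribution interactions.
\end{enumerate}
I will interpret the statement as applying to an envelope that is still active, meaning not yet settled, enforced, or expired. Once a marker has moved to $\Mtomb$, the double-execution guard makes failure the intended behaviour. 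I will state this precondition explicitly, and it disposes of check~1. For check~2, the keeper reveals the true tree $T_c$ committed at registration. By E1 (binding), $\mathsf{condHash}$ is unchanged since registration, so the hash check passes by correctness of the hash computation; no collision-resistance argument is needed for completeness.

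The central step is check~3. I will take as $\mathsf{oracleData}$ exactly the oracle readings that define the state $S$ at the given block. Liveness supplies a block whose state is $S$ and in which the keeper's transaction is included. AS4 guarantees that every leaf of $T_c$ reads oracle values from that same block, so all PriceLeaf, VolatilityLeaf, TimeLeaf and OnChainStateLeaf evaluations are taken against $S$. I will then argue by structural induction on $T_c$ that the on-chain evaluator computes the same Boolean value as $\mathrm{cond}(S)$. For leaves, this follows because the evaluator implements the stated comparison $\mathsf{op}$ against the threshold. For AND, OR and NOT nodes, it follows from compositionality. Hence the evaluator returns $1$ and does not revert, which is the completeness half of E3.

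The remaining obstacle, and the step I expect to need the most care, is showing that nothing after condition evaluation can revert. I see three risks: gas, accrued-debt arithmetic, and the external interactions. For gas, I will invoke the bound $G(T_c)\le 3000k$ together with the operational constraint on $k$, so the call fits in a block. For arithmetic, the $\mathsf{debt\_accrued}$ and health-factor computations are well defined and deterministic by AS8, and I will assume overflow-safe implementation. For the external calls, the redistribution amount is bounded by $\mathsf{max\_amount}\le v$ by constraint~9 of the encumbrance circuit and E6, and $\mathsf{keeper\_fee}<\mathsf{max\_amount}$, so the payouts are fundable from the encumbered note. Here I will have to assume that the target action contract at $\mathsf{target\_addr}$ is non-reverting. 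I would fold this assumption into ``liveness'' explicitly rather than try to prove it.

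With these pieces in place, E4 shows that the caller needs no $\sk$ and no proof, so any keeper can submit the transaction. This gives success for some oracle data revealed at the block with state $S$.
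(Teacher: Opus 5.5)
Your proposal is correct and follows the same route as the paper. The paper's proof is a one-line appeal to permissionlessness (E4) plus correct same-transaction evaluation of the revealed tree (AS4), with liveness giving inclusion; your step-by-step trace of \texttt{enforce} unpacks those ingredients, and the hypotheses you add explicitly (the envelope is still active, since otherwise the double-execution guard makes \texttt{enforce} revert by design, and the downstream reward and redistribution calls do not revert) are ones the paper's statement and proof leave implicit.
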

\begin{proof}
Permissionlessness (E4) and correct on-chain evaluation (AS4).
\end{proof}

\subsection{Oracle Manipulation: Operational Considerations}\label{sec:oracle-ops}

This subsection documents what AS4 does \emph{not} cover (cross-transaction oracle manipulation) and the condition-tree mitigations available. We label this as \emph{operational considerations} rather than a security theorem: the formal model in this paper guarantees within-transaction oracle consistency through AS4, but cross-transaction manipulation falls outside the model's scope and is properly the domain of MEV and market-microstructure analysis. The original draft of this paper included an economic-cost argument inline; reviewers correctly observed that the argument was informal relative to the rest of the security analysis. We have therefore demoted it from a theorem-level claim to operational guidance, with explicit pointers to the formal MEV literature for readers who want a rigorous treatment.

\paragraph{In-scope (formally addressed):} same-transaction oracle consistency through AS4 (Theorem~\ref{thm:enf-safe}).

\paragraph{Out-of-scope (delegated to operational discipline):}
\begin{description}
\item[Single-block flash-loan manipulation.] An adversary takes a flash loan and moves spot price within one block to trigger or prevent enforcement. Mitigation: combine PriceLeaf with a TimeLeaf that requires $\mathsf{block.timestamp} \geq t_{\mathsf{last}} + \Delta$. Any $\Delta > 0$ defeats single-block manipulation.
\item[Sustained multi-block manipulation against TWAP oracles.] Adversaries with sufficient capital can sustain price manipulation across a TWAP window. The cost of doing so is governed by AMM constant-function-market-maker dynamics [AKC$^{+}$20] and the depth of liquidity within the window. The relevant analysis is the standard TWAP-attack literature [MNW22], which establishes parameter regimes ($W$ window length, $D$ market depth) under which manipulation is rendered unprofitable. Envelope deployers should pick $W$ in light of the asset's typical depth and the position size at risk; we do not re-derive those bounds here.
\item[Oracle contract compromise.] If the oracle contract itself is compromised, the in-circuit binding of $\mathsf{oracle\_addr}$ inside $\mathsf{condHash}$ at registration prevents the registry from accepting a substituted oracle; but it does not save against a compromised but correctly-addressed oracle. Operational mitigation: choose oracle providers with appropriate decentralization and audit history; choose multi-oracle conjunctions (price-leaf AND volatility-leaf from independent oracles) when the position size warrants.
\end{description}

\paragraph{Position-aware deployment guidance.} A deployer should reason about two attacker objectives separately, each with a bounded profit:

\begin{description}
\item[(O1) Force a liquidation that should not have fired.] Attacker profit is bounded above by the redistribution intent's $\mathsf{max\_amount}$, plus any external position the attacker holds that benefits from the liquidation (for example, the attacker is the keeper and earns $\mathsf{keeper\_fee}$, or the attacker holds a short position against the borrower's asset). Call this $\Pi_{\mathsf{force}}$.
\item[(O2) Prevent a liquidation that should have fired.] Attacker profit is bounded above by the borrower's avoided loss, which equals the difference between the position's true mark-to-market value and the liquidation strike at the time of the prevented enforcement. Call this $\Pi_{\mathsf{prevent}}$.
\end{description}

Let $\mathsf{C}_{\mathsf{manip}}(W, D, \Delta)$ denote the cost of sustaining oracle manipulation against a TWAP of window $W$, market depth $D$, and tolerance $\Delta$ for the duration required by the deployer's TimeLeaf. Standard analyses [AKC$^{+}$20, MNW22] characterize this cost in terms of liquidity slippage and CFMM dynamics. The deployer's operational rule is:
$$\mathsf{C}_{\mathsf{manip}}(W, D, \Delta) > \max(\Pi_{\mathsf{force}}, \Pi_{\mathsf{prevent}}),$$
i.e., choose $W$, $\Delta$, and oracle source such that manipulation cost dominates both attacker objectives. We do not re-derive the cost-bound estimates here; the relevant parameter regimes for major DeFi assets are documented in the cited literature.

\paragraph{Oracle implementation reachability.} The binding of $\mathsf{oracle\_addr}$ inside $\mathsf{condHash}$ at registration time prevents the registry from accepting a substituted oracle address. It does \emph{not}, however, prevent the oracle implementation behind that address from changing if the oracle is itself a proxy contract. Deployers using proxied oracles inherit a residual upgrade-risk channel that is outside the construction's threat model. Mitigation: prefer non-upgradeable oracle deployments where possible, or commit a hash of the oracle implementation bytecode in $\mathsf{condHash}$ alongside the address (a forward-compatible extension to the condition-tree leaf format).

\paragraph{Scope statement.} This subsection makes no theorem-level claims. The condition-tree shape and the binding of oracle addresses in $\mathsf{condHash}$ are within-model facts established by the construction. The cross-transaction economic analysis above and the proxy-implementation reachability question are the deployer's responsibility. Failure of the operational rule does not invalidate the construction's correctness theorems; it changes the empirical security level achieved by a specific deployment.

\subsection{Encumbrance Indistinguishability}

\begin{proposition}[Encumbrance Indistinguishability --- Conditional on AS5 and AS6]\label{prop:eig}
Under AS2, AS5, and AS6:
$$\Adv^{\mathsf{EIG}}(\mathcal{A}, \lambda) \leq \varepsilon_{\mathsf{ZK}} + 2\varepsilon_{\mathsf{ind}} + \varepsilon_{\mathsf{cr}} + Q_H/|\mathbb{F}_p|.$$
\end{proposition}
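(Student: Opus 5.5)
The proof is a sequence of game hops that moves from the real EIG experiment to one in which the challenge bit $b$ is information-theoretically hidden. The game EIG is deferred to Appendix~\ref{app:games}, so I take it in its natural form. The adversary sees the public ledger transcript produced by a challenger who, depending on $b$, either registers an envelope for a fresh note $N$ or performs an ordinary (unencumbered) note operation of matching shape. The adversary must guess $b$. The observable components are the SNARK proofs, the note commitment $\cm(N)$, and the encumbrance nullifier $\nfe(N)$, which is a uniformly-looking field element in the encumbered world and is replaced by a dummy/spend-namespace value of the same length in the other. The four terms of the bound suggest four hops, in this order.

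\textbf{Hop 1 (AS5, cost $\varepsilon_{\mathsf{ZK}}$).} Replace every real Prove call, for the Encumbrance circuit and for the Spend circuit used in the cover operation, by the zero-knowledge simulator run on the same public inputs. Any distinguisher between the two games yields a distinguisher against UltraHonk's computational zero knowledge. After this hop the transcript depends only on public inputs, so the witnesses $r$, $\sk$, $v$, $\mathsf{merkle\_path}$ no longer appear.

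\textbf{Hop 2 (AS6, cost $2\varepsilon_{\mathsf{ind}}$).} Invoke indifferentiability twice. The first use replaces $\mathsf{Poseidon2}_4$ (computing $\cm$) by a random oracle. The second replaces $\mathsf{Poseidon2}_3$ (computing $\nfe$ and $\nfs$) by a random oracle. These are the two distinct widths whose outputs are public in the challenge transcript, which is why the factor is $2$. The intent hash and position commitment are handled by the same argument, or are absorbed because they are identical across both worlds given matched $v$ and intent.

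\textbf{Hops 3 and 4 (AS2 and oracle guessing, cost $\varepsilon_{\mathsf{cr}} + Q_H/|\mathbb{F}_p|$).} With everything now modelled as random oracles, I argue exactly as in Proposition~\ref{prop:fwdback}. The value $\nfe(N)=H_3(r,\cm(N),\texttt{ENCUMBER\_TAG})$ is uniform and independent of the adversary's view unless one of two events occurs. Event (i): $\mathcal{A}$ queries $H_3$ on a point whose first coordinate is $r$. Since $r$ is uniform and hidden (Lemma~\ref{lem:hiding}, and now also the simulated proofs), this happens with probability at most $Q_H/|\mathbb{F}_p|$. Event (ii): the output collides with a spend-namespace nullifier or another note's value. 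The domain tags reduce this to a collision, costing $\varepsilon_{\mathsf{cr}}$ by AS2. Conditioned on neither event, the encumbered and unencumbered transcripts are identically distributed, so the final game gives advantage $0$. Summing the hops yields the stated bound.

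\textbf{Main obstacle.} The hard step is Hop 1 together with making the two worlds syntactically matched. The registry's on-chain actions, such as inserting into $\Mactive$ and revealing $\mathsf{condHash}$ at enforcement, leak $b$ unless EIG restricts the adversary's view to the commitment-and-proof layer before any enforce or settle call. I would first fix the EIG definition so that both branches post a proof and a field element to a public set of the same form. Only then is the simulator hop sound and the random-oracle argument enough to close the remaining gap.
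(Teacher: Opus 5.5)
Your hybrid sequence matches the paper's proof step for step. You simulate the proofs under AS5 ($\varepsilon_{\mathsf{ZK}}$), then replace $\mathsf{Poseidon2}_4$ and $\mathsf{Poseidon2}_3$ by random oracles under AS6 ($2\varepsilon_{\mathsf{ind}}$). Finally you bound the bias of the last game by $Q_H/|\mathbb{F}_p|$ for a query involving $r_b$, plus $\varepsilon_{\mathsf{cr}}$ for a collision.

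The gap is that you run this sequence on a game you made up, not on the paper's EIG. In Appendix~\ref{app:games}, EIG is a linkage game. $\mathcal{A}$ chooses $(v,\pk,\mathsf{aid})$. The challenger samples $r_0,r_1$, forms $N_0,N_1$, and registers an envelope for $N_b$. It then gives $\mathcal{A}$ the commitments $\cm(N_0),\cm(N_1)$ but not $r_0,r_1$. Both branches register an envelope, so there is no encumbered-versus-unencumbered asymmetry to neutralise: no extra $\Mactive$ insertion, no later $\mathsf{condHash}$ reveal, no cover operation. The obstacle you single out therefore does not arise. ``Fixing the EIG definition'' would replace the proposition with a different one. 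In the game you do describe, you leave unproved your closing claim that the two transcripts are identically distributed once events (i) and (ii) are excluded, as you acknowledge. So the sketch does not yet give a bound for any fixed game.

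A second issue appears in both your sketch and the paper's proof, and your Hop~1 remark depends on it: zero knowledge hides witnesses, not public inputs. The Encumbrance circuit exposes $\cm_{\mathsf{note}}$ as \texttt{pi[0]}, and the envelope tuple $E$ stores it. In EIG the adversary already holds $\cm(N_0)$ and $\cm(N_1)$. Comparing them with the registered $\cm_{\mathsf{note}}$ yields $b$ without any oracle query on $r_b$.

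The unblinded $\mathsf{position\_commit}=\mathsf{Poseidon2}_3(v,\mathsf{debt\_principal},\cm_{\mathsf{note}})$ leaks in the same way. In the final random-oracle game, $\mathcal{A}$ can query $\mathsf{RO}_3(v,\mathsf{debt\_principal},\cm(N_i))$ for $i\in\{0,1\}$ whenever $\mathsf{debt\_principal}$ is known or guessable. It very likely is, since the registry needs it to compute $\mathsf{debt\_accrued}$. These queries are not charged to the $Q_H/|\mathbb{F}_p|$ term.

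So the final step, that $\mathcal{A}$ can win only by hitting $r_b$ or finding a collision, holds only under one of these conditions:
\begin{itemize}
\item the registered transcript omits $\cm_{\mathsf{note}}$ (membership proved against $\mathsf{tree\_root}$ alone) and uses the blinded position commitment of Appendix~\ref{app:privacy};
\item or $\mathcal{A}$ is not given $\cm(N_0),\cm(N_1)$.
\end{itemize}
A complete argument has to state one of these restrictions explicitly and use it in the last game.
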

\begin{proof}
Via an explicit three-step hybrid argument over four games:

\textbf{Game 0 (real EIG).} \textbf{Game 1} (replace ZK proof with simulator $\mathsf{Sim}_{\mathsf{UH}}$, by AS5): $|\Pr[b' = b \text{ in Game 0}] - \Pr[b' = b \text{ in Game 1}]| \leq \varepsilon_{\mathsf{ZK}}$.

\textbf{Game 2} (replace $\mathsf{Poseidon2}_4$ with random oracle $\mathsf{RO}_4$, by AS6): $|\Pr[b' = b \text{ in Game 1}] - \Pr[b' = b \text{ in Game 2}]| \leq \varepsilon_{\mathsf{ind}}$.

\textbf{Game 3} (replace $\mathsf{Poseidon2}_3$ with random oracle $\mathsf{RO}_3$, by AS6): $|\Pr[b' = b \text{ in Game 2}] - \Pr[b' = b \text{ in Game 3}]| \leq \varepsilon_{\mathsf{ind}}$.

In Game 3, $\mathcal{A}$ can distinguish $b$ only by guessing $r_b$ (probability $\leq Q_H/|\mathbb{F}_p|$) or finding a collision (probability $\leq \varepsilon_{\mathsf{cr}}$). Hence $|\Pr[b' = b \text{ in Game 3}] - 1/2| \leq \varepsilon_{\mathsf{cr}} + Q_H/|\mathbb{F}_p|$.

By the triangle inequality: $\Adv^{\mathsf{EIG}} \leq \varepsilon_{\mathsf{ZK}} + 2\varepsilon_{\mathsf{ind}} + \varepsilon_{\mathsf{cr}} + Q_H/|\mathbb{F}_p|$.
\end{proof}

\subsection{Settle Circuit Security}

\begin{theorem}[Settle Circuit Security]\label{thm:settle-sec}
Under AS1, AS2, and AS3:
$$\Adv^{G_{\mathsf{settle}}}(\mathcal{A}, \lambda) \leq \varepsilon_{\mathsf{kS}} + 2\varepsilon_{\mathsf{cr}} + \varepsilon_{\mathsf{dl}}.$$
\end{theorem}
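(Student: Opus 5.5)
The plan is a reduction in the style of Theorem~\ref{thm:enc-sec}. Fix the natural reading of $G_{\mathsf{settle}}$: the challenger registers an envelope for an honest note $N$ with key pair $(\sk_N,\pk_N)$ and gives the adversary $\pk_N$, all public on-chain data ($\cm(N)$, $\nfe(N)$, the tree root, the envelope record) and, in the strongest variant, even the blinding factor $r_N$. The adversary wins if it produces an accepting Settle proof $\Pi^*$ whose public input $\mathsf{nf\_encumber\_public\_input}$ equals a marker currently in $\Mactive$ for $N$, without knowing $\sk_N$.

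\textbf{Step 1 (extraction, AS1).} Run the knowledge extractor on $\Pi^*$ to obtain a witness $w=(v',r',\pk_x',\mathsf{aid}',\sk',\mathsf{merkle\_path}')$ satisfying every Settle constraint. This fails with probability at most $\varepsilon_{\mathsf{kS}}$.

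\textbf{Step 2 (binding the witness to $N$, AS2).} From the equality constraint,
\[
\mathsf{Poseidon2}_3(r',\cm',\texttt{ENCUMBER\_TAG})=\nfe(N)=\mathsf{Poseidon2}_3(r_N,\cm(N),\texttt{ENCUMBER\_TAG}).
\]
So either $(r',\cm')=(r_N,\cm(N))$ or we have a $\mathsf{Poseidon2}_3$ collision. In the first case, the commitment constraint gives
\[
\mathsf{Poseidon2}_4(v',r',\pk_x',\mathsf{aid}')=\cm(N),
\]
so either the opening matches the true note, in particular $\pk_x'=\pk_{N,x}$, or we have a $\mathsf{Poseidon2}_4$ collision. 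These two collision events contribute the $2\varepsilon_{\mathsf{cr}}$ term. The Merkle constraint is only used to certify that $\cm'$ is the registered leaf, and it needs nothing beyond the same collision resistance.

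\textbf{Step 3 (key extraction, AS3).} Conditioned on no extraction failure and no collision, the constraint $\pk_x'=\mathsf{derive\_pubkey\_x}(\sk')$ shows that $\sk'$ is a discrete log of $\pk_N$, or of $-\pk_N$, since the $x$-coordinate determines the point only up to sign; negating $\sk'$ handles that case. We build a DL solver $\mathcal{B}$ that embeds its challenge $X$ as $\pk_N$. $\mathcal{B}$ samples $v,r,\mathsf{aid}$ itself, so it can compute $\cm(N)$, $\nfe(N)$ and the tree without the secret key, and it simulates the game to $\mathcal{A}$. Then $\mathcal{B}$ runs the extractor and outputs $\pm\sk'$. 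A union bound over the three bad events gives $\varepsilon_{\mathsf{kS}}+2\varepsilon_{\mathsf{cr}}+\varepsilon_{\mathsf{dl}}$.

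\textbf{Main obstacle.} The delicate point is the simulation in Step 3. The real game may expose values keyed by $\sk_N$, namely spend nullifiers $\nfs$ of other notes or Spend proofs made by the owner, which $\mathcal{B}$ cannot compute honestly. To handle this I would first restrict $G_{\mathsf{settle}}$ so that it exposes only data derivable from $\pk_N$, and argue that, before settlement, the owner does not publish $\nfs(N)$ for the encumbered note. If owner proofs must also be supplied, I would answer them with the ZK simulator and program $\nfs$ as random oracle outputs. That would add $\varepsilon_{\mathsf{ZK}}$ and AS6 terms, so the clean bound should be stated for the restricted game. A secondary subtlety is that extraction must be straight-line, or the rewinding must be compatible with $\mathcal{B}$'s embedding, which is standard for AS1 as stated.
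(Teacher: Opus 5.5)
Your argument follows the paper's route, organized as one extraction chain rather than a list of attack classes. The paper's Attack~1 (forging without $\mathsf{owner\_sk}$, charged $\varepsilon_{\mathsf{kS}}+\varepsilon_{\mathsf{dl}}$) is your Steps~1 and~3. Its Attacks~2 and~4 (a $\mathsf{Poseidon2}_3$ collision under \texttt{ENCUMBER\_TAG}, a $\mathsf{Poseidon2}_4$ collision on $\cm(N)$) are exactly your two collision events, combined by the same union bound. You are more careful than the paper on two points it skips: the sign ambiguity of a point recovered from $\pk_x$, and the need for the DL reducer to simulate the game without $\sk_N$.

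One case is missing. $G_{\mathsf{settle}}$ also counts W2 as a win, namely a settle accepted with $\mathsf{repayment\_amount} < \mathsf{debt\_accrued}$, and your ``natural reading'' of the game drops it. The paper handles it as Attack~3: the on-chain check $\mathsf{repayment\_amount} \geq \mathsf{debt\_accrued}(t_{\mathsf{now}})$ rejects such a settle deterministically, so it contributes $0$ to the bound. You should add that line.

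On your ``main obstacle'': the $\sk_N$-dependent artifact the game cannot avoid is not a Spend proof but the registration proof $\Pi_{\mathsf{enc}}$. The Encumbrance circuit takes $\mathsf{owner\_sk}$ as a witness (constraint~2), and the setup of $G_{\mathsf{settle}}$ itself registers the envelope. Your restriction to data derivable from $\pk_N$ therefore amounts to modelling registration as an abstract insertion of $\nfe(N)$ into $\Mactive$. The game's wording permits that reading, and it is the one under which the stated bound from AS1--AS3 alone is justified.

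If the adversary does see the \texttt{create} transaction, your reducer must produce $\Pi_{\mathsf{enc}}$ with the simulator. That costs $\varepsilon_{\mathsf{ZK}}$ (AS5). It also requires extraction to keep working when the adversary holds a simulated proof, which is a simulation-extractability property that plain AS1 does not give. The paper's proof is silent on this, so it is a caveat on the theorem's bound rather than a flaw specific to your argument.
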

\begin{proof}
Four attack classes:

\textbf{Attack 1 (forge without owner\_sk):} By AS1 (extractor) and AS3 (DL hardness), probability $\leq \varepsilon_{\mathsf{kS}} + \varepsilon_{\mathsf{dl}}$.

\textbf{Attack 2 (cross-note settle):} Requires a $\mathsf{Poseidon2}_3$ collision under \texttt{ENCUMBER\_TAG}, probability $\leq \varepsilon_{\mathsf{cr}}$.

\textbf{Attack 3 (false repayment\_amount):} Blocked by the on-chain arithmetic comparison; deterministically fails.

\textbf{Attack 4 ($\cm(N)$ forgery):} Requires a $\mathsf{Poseidon2}_4$ collision, probability $\leq \varepsilon_{\mathsf{cr}}$.

By union bound: $\Adv^{G_{\mathsf{settle}}} \leq \varepsilon_{\mathsf{kS}} + \varepsilon_{\mathsf{dl}} + 2\varepsilon_{\mathsf{cr}}$.
\end{proof}

\subsection{Deployment Models Satisfying AS7}\label{sec:deployment}

AS7 (registry immutability) is a deployment-time assumption rather than a cryptographic one. Its strength is therefore inherited from the deployment recipe rather than from a hash function or proof system. Because reviewers may reasonably ask what real deployments look like, we provide three concrete deployment templates and analyse what each guarantees.

\paragraph{Template 1: Strict immutability (strongest).}
\begin{itemize}
\item Deploy \texttt{EnvelopeRegistry.sol} with no admin role, no \texttt{Ownable} pattern, no proxy delegate, no upgrade mechanism.
\item Deploy via \texttt{CREATE2} with a deterministic salt and known initialization bytecode; the deployment transaction is publicly verifiable.
\item No \texttt{selfdestruct} opcode in the contract bytecode.
\item \texttt{IRM\_addr} and \texttt{oracle\_addr} for any registered envelope are immutable post-registration (already enforced by the in-circuit binding of \texttt{condHash}; restated here for emphasis).
\item Adverse modification of $\Mactive$ or $\Mtomb$ outside the four protocol-defined functions (\texttt{create}, \texttt{settle}, \texttt{enforce}, \texttt{expire}) is impossible.
\end{itemize}
This template gives the cleanest mapping to AS7. The trade-off is no remediation path for protocol bugs: a discovered vulnerability requires a separate registry deployment, with existing envelopes either drained through normal exits or stranded.

\paragraph{Template 2: Timelock-governed parameters with immutable state-machine (intermediate).}
The state-machine functions and the marker sets remain immutable per Template 1. A separate \texttt{ParameterRegistry} contract holds operational parameters (default IRM lookup, oracle whitelist, condition-tree gas budget). \texttt{ParameterRegistry} is governance-controlled with a $\geq 7$-day timelock and a multisig with $\geq 5$-of-9 threshold.

This template preserves the structural guarantee that no party can modify $\Mactive$ or $\Mtomb$, while permitting parameter updates that affect \emph{future} envelopes only. Existing envelopes are bound to the parameter snapshot committed at their registration, so timelock governance does not retroactively weaken an active encumbrance.

The trade-off is a slightly weaker AS7: a sophisticated adversary capable of compromising the timelock governance could front-run a future envelope's registration with adversarial parameters, but cannot modify or release an already-registered envelope.

\paragraph{Template 3: Quorum-frozen with break-glass exit (weakest acceptable).}
A multisig holds a single ``emergency drain'' function that, when triggered by $\geq m$ out of $n$ signers within a one-block window, freezes new \texttt{create} calls and lets owners withdraw their notes through a special-purpose exit. The function does not modify $\Mactive$ for active envelopes; it only blocks new entries.

This template is the weakest one we consider compatible with the spirit of AS7. The break-glass function does not weaken existing encumbrances, but the deployer has voluntarily assumed liveness risk: a malicious quorum could stop the protocol from accepting new envelopes (a denial of service, not a violation of NCEE). This template is appropriate for early-stage deployments where the cost of stranding due to discovered bugs is high; production deployments at meaningful TVL should use Template 1 or Template 2.

\paragraph{What does not satisfy AS7.}
We list common failure modes explicitly:
\begin{itemize}
\item Any \texttt{Ownable} pattern with a key-controlled admin who can pause or upgrade the registry's state-machine functions: violates AS7 directly.
\item OpenZeppelin's \texttt{TransparentUpgradeableProxy} or UUPS upgrade pattern over the registry's main implementation: violates AS7. Upgrade introduces a path by which an updated implementation could mutate $\Mactive$ outside the four protocol-defined functions.
\item Any compiler / language feature that introduces \texttt{delegatecall} into the registry's marker-set storage layout from non-protocol code paths: violates AS7.
\item ``Pause'' mechanisms that can prevent \texttt{enforce} from being called: do not strictly violate AS7 but violate P4 (permissionless enforcement) of NCEE.
\end{itemize}

\paragraph{Verification.} For each deployment, AS7 satisfaction is publicly verifiable: read the bytecode at the registry address, confirm absence of admin-mediated mutation paths, and confirm the deployment transaction's source bytecode matches the audited source.

\section{Privacy Analysis}\label{sec:privacy}

The envelope's privacy claims are deliberately modest:
\begin{itemize}
\item The main construction hides note identity and witness data to an on-chain observer who does not know the note randomness.
\item The main construction does \emph{not} fully hide position size when $\mathsf{position\_commit}$ is published without additional blinding.
\item The blinded-position variant in Appendix~\ref{app:privacy} adds that protection with minimal circuit overhead.
\item Privacy claims are about on-chain observation; they do not cover network metadata, RPC leakage, or mempool timing.
\end{itemize}
Formally, the main privacy game is Encumbrance Indistinguishability (EIG), analyzed in \S\ref{prop:eig}. Appendix~\ref{app:privacy} records the operational leakage profile, including durable pseudonym linkage through $\nfe$, condition-tree disclosure at enforcement time, and mitigations.

\section{Application: Non-Custodial Collateralized Lending}\label{sec:lending}

This section should be read as an application of the theorem and its witness construction, not as an additional source of formal claims. The envelope yields a direct non-custodial lending pattern: a borrower shields collateral into a note, registers an envelope committing a liquidation condition and redistribution intent, and receives credit once the lender verifies that the encumbrance marker is active.

At liquidation time, any keeper may call $\texttt{enforce}(\mathsf{eid}, \mathsf{oracleData})$ when the condition tree evaluates to true.

Two operational points matter:
\begin{enumerate}
\item The gas-expensive operations are \texttt{create} and \texttt{settle}, which invoke proof verification, while the enforcement path requires no proof and remains comparatively cheap. Appendix~\ref{app:bench} reports the corresponding measurements.
\item Because \texttt{enforce} is permissionless, keeper rewards remain exposed to standard front-running and private-orderflow considerations. This affects liveness incentives, not the core safety theorem.
\end{enumerate}

Compared with custody-based protocols such as Aave, the envelope changes the trust profile rather than removing assumptions altogether. Custody risk is exchanged for proof-system soundness, oracle integrity, and registry immutability.

\subsection{Practical Economics: Position-Value Regimes}\label{sec:economics}

The on-chain costs reported in Appendix~\ref{app:bench} (approximately 2.8M gas for \texttt{create}, 2.7M gas for \texttt{settle}, and 175k--357k gas for \texttt{enforce} depending on tree size) are high relative to custody-based baselines (Aave deposit $\approx 200$k gas, Uniswap V3 swap $\approx 150$k gas). It is important for readers to understand the position-value regime over which the construction is economical, and the gas environments in which that regime widens.

\paragraph{Break-even position size.} Let $G_{\mathsf{create}}$ be the gas cost of \texttt{create} (treated as the dominant lifetime cost), $p_{\mathsf{gas}}$ the gas price in gwei, $p_{\mathsf{ETH}}$ the ETH price in USD, and $r_{\mathsf{custody}}$ the implicit cost of custody risk that the construction eliminates (a parameter the deployer chooses based on counterparty exposure). The break-even position size $V_{\mathsf{break}}$ satisfies:
$$V_{\mathsf{break}} \cdot r_{\mathsf{custody}} = G_{\mathsf{create}} \cdot p_{\mathsf{gas}} \cdot 10^{-9} \cdot p_{\mathsf{ETH}}.$$

\paragraph{Concrete break-even tables.} Assuming $G_{\mathsf{create}} = 2.8 \times 10^6$ gas, $p_{\mathsf{ETH}} = \$2{,}500$, and varying gas environments:

\begin{table}[!htbp]
\centering
\footnotesize
\renewcommand{\arraystretch}{1.15}
\setlength{\tabcolsep}{6pt}
\begin{tabular}{@{}lrrr@{}}
\toprule
\textbf{Environment} & \textbf{Gas price} & \textbf{\texttt{create} cost} & \textbf{Break-even $V_{\mathsf{break}}$} \\
                     & \textbf{(gwei)}    & \textbf{(USD)}              & \textbf{at $r_{\mathsf{custody}}=1\%$} \\
\midrule
L1, congestion          & 100  & \$700   & \$70{,}000 \\
L1, normal              & 25   & \$175   & \$17{,}500 \\
L1, calm                & 5    & \$35    & \$3{,}500 \\
L2 (Base, Optimism)     & 0.05 & \$0.35  & \$35 \\
L2 (Arbitrum, congested)& 0.1  & \$0.70  & \$70 \\
\bottomrule
\end{tabular}
\caption{Indicative break-even position size for non-custodial encumbrance under varying gas environments. $r_{\mathsf{custody}}$ is the deployer's chosen valuation of avoided custody risk; 1\% is illustrative for a roughly one-month position.}\label{tab:economics}
\end{table}

\paragraph{Practical takeaways.}
\begin{itemize}
\item On L1 mainnet, the construction is naturally suited to positions in the \$10k--\$1M+ range, where the per-position custody-avoidance value clears the gas cost.
\item On L2s, the construction is economical at consumer-scale position sizes (\$50+).
\item Recursive proof aggregation (Appendix~\ref{app:bench}, $N=10$ aggregation) reduces the amortized per-operation gas cost to roughly $357{,}000$, bringing L1 break-even down to roughly \$2{,}500--\$5{,}000 at typical gas prices.
\item Different IRM and condition-tree shapes shift these numbers: a small condition tree ($k = 5$ leaves) keeps \texttt{enforce} costs near 175k gas, while a complex tree ($k = 50$) drives enforcement gas up linearly.
\end{itemize}

The construction is not appropriate for positions whose value is small relative to the per-position deployment cost. It is well-suited to L2 deployments and to L1 deployments at meaningful position size --- which is to say, to the regime where non-custodial enforcement is most needed in any case.

\section{Related Work}\label{sec:related}

This section was substantially expanded in the present revision. The original draft summarized adjacent work in two paragraphs; reviewers correctly noted that several closely related systems (Hawk, Penumbra, Railgun, Zcash, ERC-7710/7715, Anoma) deserve specific differentiation. We organize the comparison around what each system provides and what it does not provide for the NCEE question studied here.

\subsection{Private Smart Contracts (Hawk)}

Hawk [KMS$^{+}$16] introduces privacy-preserving smart contracts via a privileged ``manager'' party who learns participant inputs in clear, computes the contract output, and posts a zero-knowledge proof of correct execution. From an NCEE standpoint, Hawk solves a related but distinct problem: it provides confidentiality and delegated execution under a trusted-manager assumption. The manager is by construction a privileged co-signer, which means Hawk does not satisfy NCEE-P1 (self-custody, no required co-signer). The trust decomposition of Hawk is not the trust decomposition we study; the contributions are complementary rather than overlapping.

\subsection{Private Record and Note Systems (Zexe, Veri-Zexe, Zcash, Penumbra, Railgun, Tornado-style pools)}

These are the closest architectural relatives of the envelope. They provide commitments, nullifiers, and rich private-state transitions [BCG$^{+}$20, B$^{+}$22, HBHW24, Pen24, Rai21, Tor19]. What they share with us is the commitment-based ledger model (analogous to our extended PSLM) and the notion of a nullifier marking spent state.

What they do not provide is owner-independent enforcement against an active restriction state. In Zcash, Penumbra, and Railgun, a note's spendability is determined by the holder's nullifier secret alone. There is no protocol-maintained state (analogous to our $\Mactive$) against which spend validity is checked. The owner can always spend, given the nullifier secret; what the system protects is the value, recipient, and history of those spends. That is privacy, not encumbrance.

Zexe and Veri-Zexe support general-purpose zero-knowledge state transitions and could in principle be parametrised to encode the envelope's marker-set checks. The contribution of this paper relative to Zexe-style frameworks is twofold: first, the structural impossibility theorem under KS, which Zexe's setting does not address; and second, the explicit nullifier separation (encumbrance nullifier from $r$, not $\sk$), which is novel as far as we know. A Zexe-based reimplementation of the envelope is plausible and would be a useful follow-up.

Penumbra deserves a separate note: it includes delegated execution and shielded staking, which share architectural genes with parts of the envelope construction. Penumbra's staking does not, however, model owner-independent enforcement of an external condition; the economic logic is internal to the protocol, not parameterised by an external condition tree.

\subsection{Bitcoin Covenant-Style Restrictions (MAST, CTV)}

MAST [Max13] and CTV-class proposals constrain future transaction structure in a UTXO setting. They target a different execution model (UTXO with covenant rules) and a different question (what spends are admissible given covenant predicates). They achieve a partial form of transition restriction --- we acknowledge this --- but do not aim at the full NCEE shape, in particular at permissionless enforcement of an external condition with a redistribution target. Where Bitcoin covenants restrict \emph{shape}, the envelope restricts \emph{shape and outcome together}.

\subsection{Account-Abstraction and Delegation Frameworks (ERC-4337, ERC-7710, ERC-7715, EIP-7702)}

ERC-4337 [Eth21] introduces wallet-as-contract with custom validation logic. ERC-7710 [Eth24c] standardizes smart-contract delegation interfaces. ERC-7715 [Eth24d] extends this with a general framework for granting wallet permissions. EIP-7702 [Eth24a] lets EOAs temporarily install delegate code.

These improve execution control but, as we showed in Corollaries~\ref{cor:eoa}, \ref{cor:4337}, and \ref{cor:7702}, do not remove the KS-side structural obstacle for standard account-held asset paths. That is the central distinction: the standards listed here regulate \emph{authorization decisions}, whereas the envelope changes the \emph{authorization-coupled state}. The two are complementary --- one expects mature deployments to compose ERC-4337 wallets that interface with envelope-encumbered notes --- but they operate at different layers.

\subsection{Keeper and Automation Networks (Chainlink Automation, Gelato, Keep3r)}

These networks provide liveness infrastructure for triggered actions: a keeper monitors conditions and submits transactions when triggers fire. They are crucial for the envelope's enforcement liveness in practice. They do not, however, provide cryptographic asset guarantees of the kind we study; an automation network with malicious keepers would cause liveness failures, not safety failures, against an envelope's NCEE properties. The decomposition we propose is exactly that: keepers provide liveness, the envelope construction provides safety.

\subsection{Intent and Solver Frameworks (ERC-7683, CoW Protocol, Anoma)}

ERC-7683 [Eth24b] standardizes a cross-chain intent format for solver networks. CoW Protocol [CoW22] coordinates batch auction-based execution across DEX liquidity. Anoma [BM$^{+}$22] provides a more general intent-and-solver architecture with private state.

These coordinate execution and routing but are not themselves encumbrance mechanisms for standard account-based assets. An intent solver that promises to liquidate collateral when a condition fires can do so only if the underlying asset can be moved without the owner's per-action consent --- which is exactly the question this paper studies. Intent frameworks therefore stand to benefit from envelope-style primitives at the asset-encumbrance layer; our construction can be read as the missing primitive that lets intent-and-solver architectures handle non-custodial collateral safely.

Anoma's resource-machine model is the closest in spirit to our PSLM extension. A Anoma-native instantiation of the envelope is conceivable and would be an interesting follow-up.

\subsection{Comparative Claim}

The comparative claim is correspondingly narrow: to our knowledge, prior work does not jointly provide both an explicit structural impossibility theorem under a KS-style axiom and a matching commitment-based witness construction realizing all four NCEE properties within one formal framework. Each of the systems above provides part of the picture --- privacy, delegation, intent coordination, or covenant-style restrictions --- and the envelope composes naturally with several of them. The paper's contribution is the structural separation and the witnessing construction, not displacement of any of those systems.

\section{Limitations}\label{sec:limitations}

We list explicit limitations of the construction and of the analysis.

\paragraph{Model scope.} The KS / AAC dichotomy is stated for transfer-dichotomous asset classes. The model does not classify hybrid asset systems whose spend paths neither cleanly preserve a unilateral owner-authorized weakening nor cleanly couple to mechanism state. Such hybrids exist in practice (for example, multi-sig wallets with timelocked recovery paths), and extending the model to them is left as future work.

\paragraph{Deployment-assumption strength.} AS7 is a deployment assumption, not a cryptographic one. Strict immutability (Template 1, \S\ref{sec:deployment}) maps cleanly onto AS7; intermediate templates require operational discipline. The witness construction's NCEE guarantees inherit the strength of the chosen deployment template.

\paragraph{Cross-transaction oracle manipulation is out of model scope.} AS4 covers within-transaction oracle consistency. Cross-transaction manipulation is delegated to operational discipline (\S\ref{sec:oracle-ops}), with explicit pointers to the formal MEV literature. The construction's safety theorems are conditional on appropriate choice of oracle, TWAP window, and condition-tree structure for the position size at risk.

\paragraph{Practical-economics regime.} The construction is uneconomical for very small positions on expensive L1s (Table~\ref{tab:economics}). It is well-suited to L2 deployments and to mid-to-large L1 positions. Recursive aggregation widens the regime but does not eliminate the per-position cost floor.

\paragraph{Privacy is partial.} The blinded-position variant (\S\ref{app:privacy}) hides position size against on-chain observers; the durable-pseudonym linkage through $\nfe$ remains and is the dominant residual leakage. Network-layer privacy (mempool, RPC, peer-to-peer metadata) is out of scope.

\paragraph{Post-quantum scope.} The construction's signature-related security (Theorem~\ref{thm:settle-sec}, $G_{\mathsf{agent\_key}}$) depends on AS3 (DL hardness), which falls to Shor's algorithm. A post-quantum variant would substitute a PQ-secure signature scheme for ECDSA; the rest of the construction (Poseidon2, ZK proofs over BN254/BLS12-381) inherits whatever PQ properties the underlying primitives provide.

\paragraph{Heuristic indifferentiability.} AS6 (Poseidon2 indifferentiability) is treated as a heuristic assumption as of the time of writing. Results citing AS6 (Proposition~\ref{prop:eig}, Proposition~\ref{prop:fwdback}) are conditional on it.

\paragraph{Implementation.} The reference implementation is complete on BN254 and partially prepared for BLS12-381 (Appendix~\ref{app:bls}). Production deployment at $\lambda = 128$ requires the BLS12-381 work to land. The construction's theorem-level results are independent of that migration.

\section{Conclusion}\label{sec:conclusion}

The paper's main claim is structural. Within the asset setting studied here, the decisive question is whether the spend path remains under unilateral owner control or is cryptographically coupled to mechanism state. The impossibility theorem under \KS{} and the envelope construction on the \AAC{} side are the two sides of that claim.

The barrier is \KS: if an ordinary spend path remains directly available to the owner key, then non-custodial irrevocable encumbrance is impossible in the sense formalized here. The witness is the envelope in the extended \PSLM: once valid spending is coupled to protocol-maintained restriction state, the four \NCEE{} properties become jointly realizable under explicit assumptions.

The levels of claim matter. The structural impossibility and necessity results are theorem-level statements in the ledger model. The achievability theorem is conditional on an explicit deployment assumption about registry immutability. Some concrete security bounds rely only on standard cryptographic assumptions; others additionally rely on heuristic or implementation assumptions for Poseidon2 or UltraHonk. The empirical sections report implementation behavior and practical-economics regimes, not theorem-level guarantees.

Future work includes stronger implementation verification, removal of heuristic assumptions where possible, BLS12-381 production support, more mature keeper-economics analysis, MEV-aware deployment strategies, and extensions of the KS / AAC dichotomy to hybrid and multi-key asset systems. The central separation does not depend on those extensions.

\vspace{1em}
\noindent\textbf{Reference implementation.} A reference implementation in Noir/UltraHonk supports the empirical claims in Appendix~\ref{app:bench}. The protocol design and theorem-level results are independent of any specific implementation; details of the implementation artifact are available from the author on request.

\appendix

\section{Complete Security Game Definitions}\label{app:games}

\subsection{$G_{\mathsf{encumber}}$}

\textbf{Setup.} $\mathcal{C}$ deploys EnvelopeRegistry with the Spend circuit verifier.
\textbf{Oracle queries.} $\mathcal{A}$ may: (1) Create$(N, E)$ --- $\mathcal{C}$ verifies proof and updates state; (2) Enforce(eid, data); (3) SpendAttempt$(N, \pi)$.
\textbf{Win.} SpendAttempt succeeds for $N$ with $\nfe(N) \in \Mactive$.

\subsection{EIG (Encumbrance Indistinguishability Game)}

\textbf{Initialization.} $\mathcal{C}$ generates $(pp, \mathsf{state}_0)$. $\mathcal{A}$ given $pp$.
\textbf{Observation phase.} $\mathcal{A}$ makes arbitrary envelope operations.
\textbf{Challenge.} $\mathcal{A}$ specifies $(v, \pk, \mathsf{aid})$. $\mathcal{C}$ samples $r_0, r_1 \xleftarrow{\$} \mathbb{F}_p$, forms $N_0 = (v, r_0, \pk, \mathsf{aid})$, $N_1 = (v, r_1, \pk, \mathsf{aid})$, flips $b \leftarrow \{0,1\}$, registers envelope for $N_b$. $\mathcal{A}$ receives $\cm(N_0), \cm(N_1)$ but \emph{not} $r_0$ or $r_1$.
\textbf{Win.} $\mathcal{A}$ outputs $b' = b$. $\Adv^{\mathsf{EIG}}(\mathcal{A}, \lambda) = |\Pr[b' = b] - 1/2|$.

\subsection{$G_{\mathsf{fwdback}}$}

Defined in full in \S\ref{sec:fwdback}. Key feature: $\mathcal{A}$ is given owner\_sk. Advantage bounded by $\varepsilon_{\mathsf{cr}} + Q_H/|\mathbb{F}_p|$.

\subsection{$G_{\mathsf{agent\_key}}$}

$\mathcal{A}$ has full agent\_sk but not owner\_sk. \textbf{Win:} redirecting more than max\_amount or to non-target addresses. \textbf{Advantage:} $\Adv^{G_{\mathsf{agent\_key}}}(\mathcal{A}, \lambda) = \Pr[\mathcal{A} \text{ wins}] \leq \varepsilon_{\mathsf{kS}} + \varepsilon_{\mathsf{cr}}$, conditional on agent\_sk $\perp$ owner\_sk.

\subsection{$G_{\mathsf{settle}}$}

\textbf{Setup.} $\mathcal{C}$ deploys EnvelopeRegistry with Settle circuit verifier. $\mathcal{C}$ generates a keypair $(\sk_{\mathsf{owner}}, \pk_{\mathsf{owner}})$, note $N = (v, r, \pk_{\mathsf{owner}}, \mathsf{aid})$, and registers an envelope with $\nfe(N) \in \Mactive$. $\mathcal{A}$ is \emph{not} given owner\_sk or $r$.

\textbf{Win conditions.}
\begin{description}
\item[\textbf{W1}:] A settle accepted with a proof not derived from owner\_sk.
\item[\textbf{W2}:] A settle accepted with $\mathsf{repayment\_amount} < \mathsf{debt\_accrued}$.
\item[\textbf{W3}:] $\nfe(N)$ removed from $\Mactive$ by any means other than a properly authorized exit.
\end{description}

\textbf{Bound.} By Theorem~\ref{thm:settle-sec}: $\Adv^{G_{\mathsf{settle}}}(\mathcal{A}, \lambda) \leq \varepsilon_{\mathsf{kS}} + 2\varepsilon_{\mathsf{cr}} + \varepsilon_{\mathsf{dl}}$ under AS1, AS2, AS3.

\section{Extended Privacy Analysis}\label{app:privacy}

\subsection{Durable Pseudonym Lifecycle}

$\nfe$ appears at registration and again when consumed, so an observer who already linked the underlying note commitment may obtain a two-event pseudonym. Delayed encumbrance and post-settlement reshielding reduce this linkage.

\subsection{Enforcement Disclosure}

At enforcement time the condition-tree preimage is revealed. This is a deliberate design choice: the system hides conditions during the position lifetime and reveals them only when enforcement becomes necessary.

\subsection{Blinded Position Commitment Variant}

To hide position size, replace
$$\mathsf{position\_commit} = \mathsf{Poseidon2}_3(\mathsf{col\_nominal}, \mathsf{debt\_principal}, \cm_{\mathsf{note}})$$
with
$$\mathsf{position\_commit}^{\mathsf{blind}} = \mathsf{Poseidon2}_4(\mathsf{col\_nominal}, \mathsf{debt\_principal}, \cm_{\mathsf{note}}, \mathsf{col\_rand}),$$
where $\mathsf{col\_rand} \xleftarrow{\$} \mathbb{F}_p$ is fresh witness randomness. Under AS6 this hides the committed values while preserving the same enforcement semantics.

\section{Threshold Keeper Protocol}\label{app:threshold}

A threshold keeper committee can be layered on top of the base construction to reduce public exposure of the condition-tree preimage before enforcement. The owner encrypts the tree to a threshold public key at registration time; when conditions approach, a quorum can decrypt for the keeper without making the tree broadly public in advance. This improves operational privacy but introduces standard threshold trust and liveness assumptions, and is therefore treated as an optional extension rather than part of the core \NCEE{} theorem.

\section{Performance Benchmarks and Proof Aggregation}\label{app:bench}

Key empirical points:
\begin{itemize}
\item Single-proof verification for \texttt{create} and \texttt{settle} is expensive relative to custody-based baselines.
\item The \texttt{enforce} path is comparatively cheap because it requires no ZK proof.
\item Recursive aggregation materially lowers the amortized per-operation gas cost.
\end{itemize}

\paragraph{Aggregation soundness assumption.} Recursive proof aggregation introduces an additional knowledge-soundness assumption on the recursion verifier: a valid aggregated proof must imply that the individual aggregated proofs were themselves valid. The reference implementation aggregates UltraHonk proofs using the standard Honk recursion gadget; the soundness of aggregation in this setting reduces to the soundness of the underlying proof system (AS1) plus the in-circuit correctness of the recursion gadget. We treat this as inherited from the proof system rather than an independent assumption, but note explicitly that gas savings from aggregation come with a small additional surface area for proof-system implementation bugs.

\begin{table}[!htbp]
\centering
\small
\renewcommand{\arraystretch}{1.15}
\begin{tabular}{@{}lr@{}}
\toprule
\textbf{Operation} & \textbf{Cost / metric} \\
\midrule
\texttt{create} (single proof) & $\approx 2.8$M gas \\
\texttt{settle} (single proof) & $\approx 2.7$M gas \\
\texttt{enforce} (small condition tree) & $\approx 175$k gas \\
\texttt{enforce} (larger condition tree) & Scales linearly in leaf count \\
Aggregated verification, $N=10$ & $\approx 357$k gas amortized per op \\
\bottomrule
\end{tabular}
\caption{Representative BN254 measurements from the reference implementation. These results are empirical and environment-dependent; they are reported to show feasibility, not as theorem-level guarantees.}\label{tab:bench}
\end{table}

\section{BLS12-381 Partial Implementation}\label{app:bls}

The current artifact is complete on BN254 and partially prepared for BLS12-381. The production motivation for BLS12-381 is the stronger target security level ($\lambda = 128$) and the availability of EIP-2537 support on post-Pectra Ethereum [Eth20]. The remaining work is implementation work in the proving backend rather than a change to the protocol design: curve support, matching Poseidon2 parameters, and the corresponding SRS and verifier plumbing. All theorem statements in the main text are independent of that migration except where concrete security levels or implementation measurements are explicitly discussed.

\end{document}